\DeclareMathAlphabet{\marking}{OML}{cmm}{b}{it}
\newcommand{\W}{\mathcal{W}}
\newcommand{\M}{\mathcal{M}}
\newcommand{\R}{\mathbb{R}}
\newcommand {\ra} {\rightarrow}
\newcommand{\mI}{\marking{i}}
\newcommand{\mO}{\marking{o}}
\newcommand{\by}[2][]{\xrightarrow[#1]{#2}}
\newcommand{\prism}{\textsc{Prism}}
\newtheorem{theoremDummy}{Theorem}
\newtheorem{lemmaDummy}{Lemma}
\newcommand{\reductionrule}[4]{
\begin{definition}{#1}\\
\label{#2}
\vspace{3pt}
\setlength{\extrarowheight}{5pt}
\begin{tabularx}{\textwidth}{lX}
{\bfseries Guard}: & {
\begin{minipage}[t]{\linewidth}
#3
\end{minipage}}
\\
{\bfseries Action}: &{
\begin{minipage}[t]{\linewidth}
#4
\end{minipage}}
\end{tabularx}

\end{definition}
}
\g@addto@macro{\UrlBreaks}{\UrlOrds}
\begin{document}
\title{Polynomial Analysis Algorithms for Free Choice Probabilistic Workflow Nets\thanks{This work was partially funded by the DFG Project 5090812 (Negotiations: Ein Modell für nebenläufige Systeme mit niedriger Komplexität).}}
\author{Javier Esparza\inst{1} \and Philipp Hoffmann\inst{1} \and Ratul Saha\inst{2}}
\institute{Technische Universit\"at M\"unchen\and National University of Singapore}
\maketitle\begin{abstract}
We study Probabilistic Workflow Nets (PWNs), a model extending 
van der Aalst's workflow nets 
with probabilities. We give a semantics for PWNs 
in terms of Markov Decision Processes
and introduce a reward model. Using a result by 
Varacca and Nielsen, we show that the expected reward of a 
complete execution of the PWN is independent of the scheduler. 
Extending previous work on reduction of 
non-probabilistic workflow nets, we present reduction rules that 
preserve the expected reward. The rules lead to a polynomial-time 
algorithm in the size of the PWN (not of the Markov decision process) 
for the computation of the expected reward. In contrast, since the 
Markov decision process of PWN can be exponentially larger than the 
PWN itself, all algorithms based on constructing the Markov decision 
process require exponential time. We report on a sample implementation 
and its performance on a collection of benchmarks.
\end{abstract}

\section{Introduction}

Workflow Petri Nets are a class of Petri nets for the representation and analysis of
business processes \cite{DBLP:journals/jcsc/Aalst98,van2004workflow,DBLP:conf/bpm/DeselE00}. 
% They have distinguished initial and final places indicating the start and end of the workflow.
They are a popular formal back-end for different notations like 
BPMN (Business Process Modeling Notation), EPC (Event-driven Process Chain), 
or UML Activity Diagrams.

There is recent interest in extending these notations, in particular BPMN,
with the concept of cost (see e.g. \cite{magnani2007bpmn,DBLP:conf/icws/SaeediZS10,sampath2011evaluation}). 
The final goal is the development of tool support for computing the worst-case 
or the average cost of a business process. A sound foundation for the latter requires to extend
Petri nets with probabilities and rewards. Since Petri nets can express complex interplay between
nondeterminism and concurrency, the extension is a nontrivial semantic problem which has been studied
in detail (see e.g. \cite{DBLP:journals/tcs/VaraccaVW06,DBLP:journals/tcs/AbbesB08,DBLP:conf/fossacs/AbbesB09}
for untimed probabilistic extensions and \cite{eisentraut2013semantics} for timed extensions).

Fortunately, giving a semantics to probabilistic Petri nets is much simpler for {\em confusion-free}
Petri nets \cite{DBLP:journals/tcs/VaraccaVW06,DBLP:journals/tcs/AbbesB08}, a class 
that already captures many control-flow constructs of BPMN. In particular, 
confusion-free Petri nets strictly contain Workflow Graphs, also called free-choice Workflow Nets \cite{DBLP:journals/jcsc/Aalst98,DBLP:journals/is/FavreFV15,FVM16,esparza2016reduction}.
In this paper we study free choice Workflow Nets extended with rewards and probabilities. 
Rewards are modeled as real numbers attached to the transitions of the workflow,
while, intuitively, probabilities are attached to transitions modeling nondeterministic choices.
Our main result is the first polynomial algorithm for computing the expected reward of a workflow.

In order to define expected rewards, we give untimed, probabilistic confusion-free nets a semantics 
in terms of Markov Decision Processes (MDP), with rewards captured by a reward function. In a nutshell, 
at each reachable marking the enabled transitions are partitioned into {\em clusters}. All transitions 
of a cluster are in conflict, while transitions of different clusters are concurrent. In the MDP
semantics, a scheduler selects one of the clusters, while the transition inside this
cluster is chosen probabilistically. We use MDPs instead of probabilistic event structures, as in 
\cite{DBLP:journals/tcs/VaraccaVW06,DBLP:journals/tcs/AbbesB08,DBLP:conf/fossacs/AbbesB09}, because for our
purposes the semantics are equivalent, and an MDP semantics allows us to use the well established reward 
terminology for MDPs \cite{puterman2014markov}.

In our first contribution, we prove that the expected reward of a confusion-free workflow net 
is independent of the scheduler resolving the nondeterministic choices, and so we can properly speak 
of {\em the} expected reward of a free-choice workflow. The proof relies on a result 
by Varacca and Nielsen \cite{VN} on Mazurkiewicz equivalent schedulers. 

Since MDP semantics of concurrent systems captures all possible interleavings of transitions, the MDP of a free-choice workflow can grow exponentially in 
the size of the net, and so MDP-based algorithms for the expected reward have exponential runtime. 
In our second contribution we provide a polynomial-time {\em reduction algorithm} consisting 
of the repeated application of a 
set of {\em reduction rules} that simplify the workflow while preserving 
its expected reward. Our rules are an extension to the probabilistic case of a set of rules
for free-choice Colored Workflow Nets recently presented in \cite{esparza2016reduction}. 
The rules allow one to merge two alternative tasks, summarize or shortcut two consecutive tasks by one, 
and replace a loop with a probabilistic guard and an exit by a single task. 
We prove that the rules preserve the expected reward. The proof makes crucial use of the fact that the 
expected reward is independent of the scheduler: Given the two workflow nets before and after the reduction, 
we choose suitable schedulers for both of them, and show that the expected rewards under these 
two schedulers coincide. 

Finally, as a third contribution we report on a prototype implementation, and on experimental 
results on a benchmark suite of nearly 1500 workflows derived from industrial business processes. 
We compare our algorithm with the different algorithms based on the construction of the MDP implemented
in {\prism} \cite{KNP11}.

%Due to space limitations, most of the proofs are deferred to the appendix.

% \paragraph{Other related work.} The soundness problem has been extensively studied, both from a theoretical
% and a practical point of view, and very efficient verification algorithms have been developed 
% (see e.g. \cite{van2011soundness} for a comprehensive survey). Our approach is not more efficient for 
% checking soundness than the ones of e.g. \cite{fahland2009instantaneous}, but can also be applied to checking arbitrary properties of the input/output relation, while retaining completeness. In \cite{sadiq2004data,trvcka2009data} state-space exploration of workflows is performed to identify data flow anti-patterns
% (like a variable being assigned a value during an execution, but never being read afterwards).
% Our technique aims at avoiding state-space exploration and considers properties of the input/output relation.

% The paper is organized as follows. Section \ref{sec:workflowNets} defines
% workflow nets, free choice nets, and soundness. Section \ref{sec:rules} presents 
% our reduction rules and proves them correct. In Section \ref{sec:procedure} we first show completeness 
% for acyclic nets and then extend the result to cyclic nets. Section \ref{sec:experiments} 
% presents experimental results on the benchmarks of \cite{van2007verification,fahland2009instantaneous}. 
% Finally, Section \ref{sec:conclusion} contains some conclusions 
% and open questions. The proofs of all results can be found in the arXiv version.

\section{Workflow Nets}
\label{sec:workflowNets}

We recall the definition of a workflow net, and the properties of soundness
and 1-safeness. 

\begin{definition}[Workflow Net \cite{DBLP:journals/jcsc/Aalst98}]
A {\em workflow net} is a tuple $\W=(P,T,F,i,o)$ where
\begin{itemize}
\item $P$ is a finite set of places.
\item $T$ is a finite set of transitions ($P\cap T = \emptyset$).
\item $F \subseteq (P\times T) \cup (T \times P)$ is a set of arcs.
\item $i, o \in P$ are distinguished {\em initial} and {\em final} places such that $i$ has no incoming arcs and $o$ has no outgoing arcs.
\item The graph $(P\cup T, F\cup(o, i))$ is strongly connected.
\end{itemize}
\end{definition}

%The elements of $P\cup T$ are called  nodes of the workflow net.
We write ${}^\bullet p$ and $p^\bullet$ to denote the input and output
transitions of a place $p$, respectively, and similarly ${}^\bullet t$ and 
$t^\bullet$ for the input and output places of a transition $t$.
A {\em marking} $M$ is a function from $P$ to the natural numbers that assigns a 
number of tokens to each place. A transition $t$ is {\em enabled}
at $M$ if all places of ${}^\bullet t$ contain at least one token in $M$. 
An enabled transition may {\em fire}, removing a token 
from each place of ${}^\bullet t$ and adding one token to each 
place of $t^\bullet$. We write $M \by{t}M'$ to denote that $t$ is 
enabled at $M$ and its firing leads to $M'$. The {\em initial marking} ({\em final marking}) 
of a workflow net, denoted by $\mI$ ($\mO$), puts one token on place $i$ (on place $o$), 
and no tokens elsewhere. A sequence of transitions
$\sigma = t_1 \, t_2 \cdots t_n$ is an {\em occurrence sequence} or
{\em firing sequence} if there are markings $M_1, M_2, \ldots, M_n$ such that
$\mI \by{t_1} M_1 \cdots M_{n-1} \by{t_n} M_n$. $\mathit{Fin_\W}$ is the set of all firing sequences of $\W$ that end in the final marking. A marking is {\em reachable} if some 
occurrence sequence ends in that marking. 

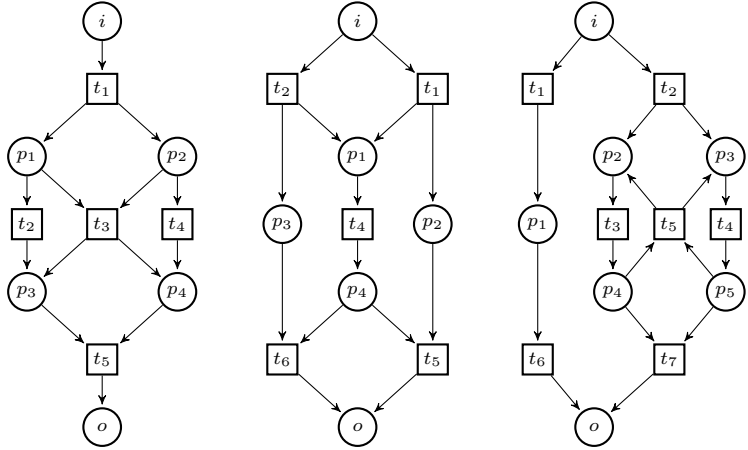
\begin{figure}
\centering
\begin{tikzpicture}[>=stealth',bend angle=45,auto]
	\tikzstyle{every node}=[font=\scriptsize]
	\tikzstyle{place}=[circle,thick,draw=black,fill=white,minimum size=5mm,inner sep=0mm]
	\tikzstyle{transition}=[rectangle,thick,draw=black,fill=white,minimum size=4mm,inner sep=0mm]
	\tikzstyle{every label}=[black]
	
	\node [place] at (0,0) (c0){$i$};
	\node [place] at (-1,-1.8) (c1){$p_1$};
	\node [place] at (1,-1.8) (c2){$p_2$};
        \node [place] at (-1,-3.6) (c3){$p_3$};
	\node [place] at (1,-3.6) (c4){$p_4$};
	\node [place] at (0,-5.4) (o){$o$};

	\node [transition] at (0,-0.9) (t1) {$t_1$}
	edge [pre] (c0)
	edge [post] (c1)
	edge [post] (c2);

	\node [transition] at (-1,-2.7) (t2) {$t_2$}
	edge [pre] (c1)
	edge [post] (c3);

	\node [transition] at (0,-2.7) (t3) {$t_3$}
	edge [pre] (c1)
        edge [pre] (c2)
        edge [post] (c3)
	edge [post] (c4);

        \node [transition] at (1,-2.7) (t4) {$t_4$}
	edge [pre] (c2)
	edge [post] (c4);

       	\node [transition] at (0,-4.5) (t5) {$t_5$}
	edge [pre] (c3)
        edge [pre] (c4)
	edge [post] (o);

\end{tikzpicture}
\qquad
\begin{tikzpicture}[>=stealth',bend angle=45,auto]
	\tikzstyle{every node}=[font=\scriptsize]
	\tikzstyle{place}=[circle,thick,draw=black,fill=white,minimum size=5mm,inner sep=0mm]
	\tikzstyle{transition}=[rectangle,thick,draw=black,fill=white,minimum size=4mm,inner sep=0mm]
	\tikzstyle{every label}=[black]
	
	\node [place] at (0,0) (c0){$i$};
	\node [place] at (0,-1.8) (c1){$p_1$};
	\node [place] at (1,-2.7) (c2){$p_2$};
        \node [place] at (-1,-2.7) (c3){$p_3$};
	\node [place] at (0,-3.6) (c4){$p_4$};
	\node [place] at (0,-5.4) (o){$o$};

	\node [transition] at (1,-0.9) (t1) {$t_1$}
	edge [pre] (c0)
	edge [post] (c1)
	edge [post] (c2);

	\node [transition] at (-1,-0.9) (t2) {$t_2$}
	edge [pre] (c0)
        edge [post] (c1)
	edge [post] (c3);

	\node [transition] at (0,-2.7) (t4) {$t_4$}
	edge [pre] (c1)
	edge [post] (c4);

	\node [transition] at (1,-4.5) (t5) {$t_5$}
	edge [pre] (c2)
        edge [pre] (c4)
	edge [post] (o);

	\node [transition] at (-1,-4.5) (t6) {$t_6$}
	edge [pre] (c3)
	edge [pre] (c4)
	edge [post] (o);

\end{tikzpicture}
\qquad
\begin{tikzpicture}[>=stealth',bend angle=45,auto]
	\tikzstyle{every node}=[font=\scriptsize]
	\tikzstyle{place}=[circle,thick,draw=black,fill=white,minimum size=5mm,inner sep=0mm]
	\tikzstyle{transition}=[rectangle,thick,draw=black,fill=white,minimum size=4mm,inner sep=0mm]
	\tikzstyle{every label}=[black]
	
	\node [place] at (-.75,0) (c0){$i$};
	\node [place] at (-1.5,-2.7) (c1){$p_1$};
	\node [place] at (-0.5,-1.8) (c2){$p_2$};
	\node [place] at (1,-1.8) (c3){$p_3$};
	\node [place] at (-0.5,-3.6) (c4){$p_4$};
	\node [place] at (1,-3.6) (c5){$p_5$};
	\node [place] at (-.75,-5.4) (o){$o$};

	\node [transition] at (-1.5,-0.9) (t1) {$t_1$}
	edge [pre] (c0)
	edge [post] (c1);

	\node [transition] at (0.25,-0.9) (t2) {$t_2$}
	edge [pre] (c0)
	edge [post] (c2)
	edge [post] (c3);

	\node [transition] at (-0.5,-2.7) (t3) {$t_3$}
	edge [pre] (c2)
	edge [post] (c4);

	\node [transition] at (1,-2.7) (t4) {$t_4$}
	edge [pre] (c3)
	edge [post] (c5);

	\node [transition] at (0.25,-2.7) (t5) {$t_5$}
	edge [pre] (c4)
	edge [pre] (c5)
	edge [post] (c2)
	edge [post] (c3);

	\node [transition] at (-1.5,-4.5) (t6) {$t_6$}
	edge [pre] (c1)
	edge [post] (o);

	\node [transition] at (0.25,-4.5) (t7) {$t_7$}
	edge [pre] (c4)
	edge [pre] (c5)
	edge [post] (o);;

	%weights
	% \node [right=0cm of t1] {$\frac{2}{5}$};
% 	\node [left=0cm of t2] {$\frac{3}{5}$};
% 	\node [below=0cm of t5] {$\frac{1}{2}$};
% 	\node [above=0cm of t7] {$\frac{1}{2}$};

\end{tikzpicture}
\caption{Three workflow nets}
\label{fig:example1}
\end{figure}

\begin{definition}[Soundness and 1-safeness \cite{DBLP:journals/jcsc/Aalst98}]
A workflow net is {\em sound} if the final marking is reachable from any reachable marking, and for every transition $t$ there is a reachable marking that enables $t$. A workflow net is {\em 1-safe} if $M(p) \leq 1$ for every reachable marking $M$ and for every  place $p$.
\end{definition}

Figure \ref{fig:example1} shows three sound and 1-safe workflow nets.  
In this paper we only consider 1-safe workflow nets, and identify a marking with the 
set of places that are marked. Markings which only mark a single place are written 
without brackets and in bold, like the initial marking $\mI$. In general, deciding
if a workflow net is sound and 1-safe is a PSPACE-complete problem. However,
for the class of {\em free-choice} workflow nets, introduced below, and 
for which we obtain our main result, there exists a polynomial algorithm \cite{desel2005free}.

\subsection{Confusion-Free and Free-Choice Workflow Nets}

We recall the notions of independent transitions and transitions in conflict.

\begin{definition}[Independent Transitions, Conflict]
Two transitions $t_1$,$t_2$ of a workflow net are {\em independent} if ${}^\bullet t_1 \cap {}^\bullet t_2 = 
\emptyset$. Two transitions are {\em in conflict 
at a marking $M$} if $M$ enables both of them and they are not independent. The set 
of transitions in conflict with a transition $t$ at a marking $M$ is called the {\em conflict set} of $t$ at $M$.
\end{definition}

In Figure \ref{fig:example1} transitions $t_2$ and $t_4$ of the left workflow
are independent, while $t_2$ and $t_3$ are in conflict.
The conflict set of $t_2$ at the marking $\{p_1, p_2\}$ is $\{t_2, t_3\}$, but at the marking $\{p_1, p_4\}$ it is $\{t_2\}$.

It is easy to see that in a 1-safe workflow net two transitions enabled at a marking 
are either independent or in conflict. Assume that a 1-safe workflow net
satisfies the following property: for every reachable marking 
$M$, the conflict relation at $M$ is an equivalence relation. Then, at every reachable marking
$M$ we can partition the set of enabled transitions into equivalence classes, 
where transitions in the same class are in 
conflict and transitions of different classes are independent. For such nets we can introduce
the following simple stochastic semantics: at each reachable marking an equivalence 
 class is selected nondeterministically, and then a transition of the class is 
selected stochastically with probability proportional to a {\em weight} attached to the 
transition. However, not every workflow satisfies this property. For example,
the workflow on the left of Figure  \ref{fig:example1} does not: at the reachable marking 
marking $\{p_1, p_2\}$ transition $t_3$ is in conflict with both $t_2$ and $t_4$, but
$t_2$ and $t_4$ are independent. Confusion-free nets, whose probabilistic semantics is 
studied in \cite{VN}, are a class of nets in which this kind of situation cannot occur. 

\begin{definition}[Confusion-Free Workflow Nets]
A marking $M$ of a workflow net is {\em confused} if there are two independent 
transitions $t_1,t_2$ enabled at $M$ such that $M \by{t_1} M'$ and the conflict sets 
of $t_2$ at $M$ and at $M'$ are different. 
A 1-safe workflow net is {\em confusion-free} if no reachable marking is confused.
\end{definition}

The workflows in the middle and on the right of Figure  \ref{fig:example1} are confusion-free. 

\begin{lemma}[\cite{VN}]
Let $W$ be a 1-safe, confusion-free workflow net. For every 
reachable marking of $W$ the conflict relation on the transitions enabled at $M$ is an equivalence relation.
\end{lemma}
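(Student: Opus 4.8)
The plan is to verify the three defining properties of an equivalence relation — reflexivity, symmetry, and transitivity — for the conflict relation on the finite set of transitions enabled at a reachable marking $M$. The first two properties are immediate from the definition, so essentially all of the work lies in transitivity, which is where confusion-freeness and 1-safeness come into play.

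Symmetry is free: the condition defining ``$t_1$ and $t_2$ are in conflict at $M$'' — namely that $M$ enables both and ${}^\bullet t_1 \cap {}^\bullet t_2 \neq \emptyset$ — is visibly symmetric in $t_1$ and $t_2$. For reflexivity I would use that in a workflow net the graph $(P\cup T, F\cup(o,i))$ is strongly connected, so every transition $t$ has a nonempty preset ${}^\bullet t$. Hence ${}^\bullet t \cap {}^\bullet t = {}^\bullet t \neq \emptyset$, so $t$ is not independent of itself; and since any enabled $t$ trivially ``enables both copies'', $t$ is in conflict with itself.

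For transitivity, suppose $t_1$ is in conflict with $t_2$ and $t_2$ is in conflict with $t_3$ at $M$, so that $M$ enables all three and both presets overlap; fix a shared place $p \in {}^\bullet t_1 \cap {}^\bullet t_2$. I want to conclude ${}^\bullet t_1 \cap {}^\bullet t_3 \neq \emptyset$, i.e.\ that $t_1$ and $t_3$ are in conflict. I would argue by contradiction: assume instead that $t_1$ and $t_3$ are independent. Then I fire $t_1$, obtaining $M \by{t_1} M'$, and invoke confusion-freeness for the independent pair $(t_1,t_3)$: since $M$ is not confused, the conflict set of $t_3$ at $M$ equals its conflict set at $M'$. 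As $t_2$ lies in the conflict set of $t_3$ at $M$, it must also lie in the conflict set of $t_3$ at $M'$; in particular, $t_2$ must still be enabled at $M'$.

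The contradiction then comes from 1-safeness. Firing $t_1$ removes a token from the shared place $p \in {}^\bullet t_1$; since the net is 1-safe we have $M(p)=1$, so $M'(p)=0$ (here one uses that $p$ is not simultaneously an output place of $t_1$, which holds for the pure nets under consideration). Because $p \in {}^\bullet t_2$, the transition $t_2$ is disabled at $M'$, contradicting the previous step. Hence $t_1$ and $t_3$ cannot be independent, so they are in conflict, and transitivity holds. I expect the only genuinely delicate point to be this last argument: correctly choosing which transition is fired and whose conflict set is tracked when applying the confusion-freeness hypothesis, together with the appeal to 1-safeness to guarantee that firing $t_1$ actually empties the shared place and thereby kills $t_2$.
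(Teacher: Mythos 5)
The paper gives no proof of this lemma at all---it is imported from \cite{VN} as a citation---so your argument has to stand on its own. Reflexivity (every transition has a nonempty preset by strong connectedness) and symmetry are fine, and your transitivity scheme---assume $t_1,t_3$ independent, fire $t_1$, transport the conflict set of $t_3$ through confusion-freeness, then use 1-safeness to conclude $t_2$ is disabled at $M'$---is the natural argument. The genuine gap is exactly the point you flagged and then waved away: the claim that the shared place $p\in{}^\bullet t_1\cap{}^\bullet t_2$ cannot also lie in $t_1^\bullet$ because the nets ``under consideration'' are pure. Nothing in this paper's definition of a workflow net forbids ${}^\bullet t\cap t^\bullet\neq\emptyset$; on the contrary, the iteration rule (Definition \ref{def:iteration}) is precisely about a transition with $t^\bullet={}^\bullet t$, and such a transition ($t_9$ in Figure \ref{fig:summary:shortcut2}) is created by the paper's own reduction example. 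If $p\in t_1^\bullet$, firing $t_1$ leaves $p$ marked, $t_2$ may remain enabled at $M'$, and your contradiction evaporates; exchanging the roles of $t_1$ and $t_3$ fails for the same reason when the place shared by $t_2$ and $t_3$ lies on a self-loop of $t_3$.

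Moreover this is not a repairable presentational slip: under the paper's literal definitions the statement is false without a purity hypothesis. Take places $i,p_1,p,q,p_3,a,b,c,o$ and transitions (preset $\to$ postset) $t_0\colon\{i\}\to\{p_1,p,q,p_3\}$, $t_1\colon\{p_1,p\}\to\{p,a\}$, $t_2\colon\{p,q\}\to\{b\}$, $t_3\colon\{q,p_3\}\to\{q,c\}$, together with $u_{x,y}\colon\{x,b,y\}\to\{o\}$ for $x\in\{p_1,a\}$, $y\in\{p_3,c\}$. This is a 1-safe (even sound) workflow net. At the reachable marking $M=\{p_1,p,q,p_3\}$ the transitions $t_1,t_2$ are in conflict (via $p$) and $t_2,t_3$ are in conflict (via $q$), but $t_1,t_3$ are independent, so conflict at $M$ is not transitive. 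Yet the net is confusion-free: the only reachable marking enabling two independent transitions is $M$ itself with the pair $t_1,t_3$, and because of the self-loops $M\by{t_1}M'$ keeps $p$ marked, so the conflict set of $t_3$ remains $\{t_2,t_3\}$, and symmetrically after firing $t_3$; each $u_{x,y}$ is enabled only at markings containing $b$, where it is the unique enabled transition. So the lemma tacitly inherits side conditions from the setting of \cite{VN} that exclude such self-loops, and your proof needs to state that hypothesis explicitly rather than assert it holds here. Note also that for the free-choice nets this paper actually analyzes, transitivity needs no purity at all: if $t_1,t_2\in p^\bullet$ and $t_2,t_3\in q^\bullet$, then $t_2 \in p^\bullet\cap q^\bullet$ forces $p^\bullet=q^\bullet$, hence $p\in{}^\bullet t_3$ and $t_1,t_3$ are in conflict---this is in essence Proposition \ref{prop:fc}.
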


Unfortunately, deciding if a 1-safe workflow net is confusion-free is a PSPACE-complete problem (this can be proved by an easy reduction from the reachability problem for 1-safe Petri nets, see \cite{DBLP:conf/ac/Esparza96} for similar proofs). Free-choice workflow nets are a syntactically defined 
class of confusion-free workflow nets.

\begin{definition}[Free-Choice Workflow Nets  \cite{desel2005free,DBLP:journals/jcsc/Aalst98}]
A workflow net is {\em free-choice} if for every two places $p_1, p_2$ either $p_1^\bullet \cap p_2^\bullet = \emptyset$ or $p_1^\bullet = p_2^\bullet$.
\end{definition}

The workflow in the middle of Figure \ref{fig:example1} is not free-choice, e.g. because of 
the places $p_3$ and $p_4$, but the one on the right is. 

It is easy to see that free-choice workflow nets are confusion-free, but even more: 
in free-choice workflow nets, the conflict set of a transition $t$ is the same at all 
reachable markings that enable $t$. To formulate this, we use the notion of a cluster.

\begin{definition}[Transition clusters]
Let $\W=(P,T,F,i,o)$ be a free-choice workflow net. The {\em cluster} of $t\in T$ is the  set of transitions $[t] = \{ t' \in T \mid {}^\bullet t \cap {}^\bullet t' \neq \emptyset\}$.\footnote{In \cite{desel2005free} clusters are defined in a slightly different way.}
\end{definition}

By the free-choice property, if a marking enables a transition of a cluster, 
then it enables all of them. We say that the marking enables the cluster; we also say that a cluster fires if one of its transitions fires.

\begin{proposition}
\label{prop:fc}
\begin{itemize}
\item Let $t$ be a transition of a free-choice workflow net. For every marking that enables $t$, 
the conflict set of $t$ at $M$ is the cluster $[t]$.
\item Free-choice workflow nets are confusion-free.
\end{itemize}
\end{proposition}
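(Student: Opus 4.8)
The plan is to prove the second claim as a corollary of the first, so the real work lies in the first. Here I would begin by unpacking definitions: when $M$ enables $t$, the conflict set of $t$ at $M$ is exactly the set of $t'$ that are enabled at $M$ and satisfy ${}^\bullet t \cap {}^\bullet t' \neq \emptyset$, which is precisely $[t]$ intersected with the transitions enabled at $M$. Since this intersection is always contained in $[t]$, it suffices to prove the reverse containment: every $t' \in [t]$ is enabled whenever $t$ is. In other words, the entire first claim reduces to the remark stated just after the definition of cluster, that enabling one transition of a cluster enables all of them.

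To prove that reduction I would first establish the sharper structural fact that all transitions of a cluster share the same preset, i.e. ${}^\bullet t = {}^\bullet t'$ for every $t' \in [t]$. Fix such a $t'$ and choose a common input place $p \in {}^\bullet t \cap {}^\bullet t'$. For an arbitrary $q \in {}^\bullet t'$, both $p^\bullet$ and $q^\bullet$ contain $t'$, so $p^\bullet \cap q^\bullet \neq \emptyset$; the free-choice condition then forces $p^\bullet = q^\bullet$, and since $t \in p^\bullet$ we obtain $t \in q^\bullet$, i.e. $q \in {}^\bullet t$. This gives ${}^\bullet t' \subseteq {}^\bullet t$, and by symmetry of the relation defining the cluster also ${}^\bullet t \subseteq {}^\bullet t'$, hence equality. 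Equal presets immediately yield the enabledness statement: if $M$ enables $t$, then every place of ${}^\bullet t' = {}^\bullet t$ carries a token, so $M$ enables $t'$. This completes the first claim.

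With the first claim in hand, the second is short. I would show that no marking enabling the two transitions in question can be confused, which in particular covers every reachable marking. Suppose $t_1, t_2$ are independent and both enabled at $M$, and $M \by{t_1} M'$. Since ${}^\bullet t_1 \cap {}^\bullet t_2 = \emptyset$, firing $t_1$ removes tokens only from places outside ${}^\bullet t_2$ and can only add tokens elsewhere, so every place of ${}^\bullet t_2$ stays marked and $t_2$ remains enabled at $M'$. The first claim then tells us the conflict set of $t_2$ is $[t_2]$ at both $M$ and $M'$, so the two conflict sets coincide and $M$ is not confused.

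The argument is largely routine; the one step that requires care is the chaining of the free-choice property in the first claim, where one must pass from ``$t$ and $t'$ share a single input place $p$'' to ``they share all input places'' by invoking $p^\bullet = q^\bullet$ for each further input place $q$ of $t'$. That equal-preset lemma is the crux, and once it is secured the rest is bookkeeping about how firing a transition with a disjoint preset leaves the preset of $t_2$ untouched.
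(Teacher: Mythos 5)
Your proof is correct and takes essentially the same route as the paper's: the first part from the free-choice property (your equal-presets lemma is precisely the detail the paper compresses into ``follows immediately''), and the second part by noting that $t_2$ stays enabled after the independent $t_1$ fires, so by the first part its conflict set is $[t_2]$ at both $M$ and $M'$. You merely spell out what the paper leaves implicit --- the chaining $p^\bullet = q^\bullet$ across all input places and the persistence of $t_2$'s enabledness --- which is sound bookkeeping rather than a different argument.
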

\begin{proof}
The first part follows immediately from the free-choice property.
For the second part, let $t_1,t_2$ be independent transitions enabled at a marking $M$ such that  $M \by{t_1} M'$. By the free-choice property, for every $t \in [t_1]$ the transitions $t$ and $t_2$ are also independent. So the conflict sets of $t_1$ at $M$ and $M'$ are both equal to $[t_1]$.
\qed
\end{proof}

\section{Probabilistic Workflow Nets}
\label{sec:prob}

We introduce Probabilistic Workflow Nets, and give them a 
semantics in terms of Markov Decision Processes. We first recall some basic definitions.

\subsection{Markov Decision Processes}

For a finite set $Q$, let $dist(Q)$ denote the set of probability distributions over $Q$.

\begin{definition}[Markov Decision Process]
A \emph{Markov Decision Process} (MDP) is a tuple $\mathcal{M} = (Q,q_0,{\it Steps})$ where $Q$ is a finite set of states, $q_0\in Q$ is the initial state, and ${\it Steps} \colon Q \ra 2^{dist(Q)}$ is the probability transition function.
\end{definition}

For a state $q$, a probabilistic transition corresponds to first nondeterministically choosing a probability distribution $\mu \in {\it Steps}(q)$ and then choosing the successor state $q'$ probabilistically according to $\mu$.

A path is a finite or infinite non-empty sequence 
$\pi = q_0 \by{\mu_0} q_1 \by{\mu_1} q_2 \ldots$ where $\mu_i \in {\it Steps}(q_i)$ for every $i \geq 0$.
We denote by $\pi(i)$ the $i$-th state along $\pi$ (i.e., the state $q_i$), and by $\pi^i$ the prefix of $\pi$ ending at $\pi(i)$ (if it exists). For a finite path $\pi$, we denote by ${\it last }(\pi)$ 
the last state of $\pi$. A {\em scheduler} is a function that 
maps every finite path $\pi$ of $\mathcal{M}$ to a distribution of ${\it Steps}({\it last}(\pi))$. 

For a given scheduler $S$, let $\mathit{Paths}^S$ denote all infinite paths 
$\pi = q_0 \by{\mu_0} q_1 \by{\mu_1} q_2 \ldots$
starting in $s_0$ and satisfying $\mu_{i} = S(\pi^i)$ for every $i \geq 0$.
We define a probability measure $\mathit{Prob}^S$ on $\mathit{Paths}^S$ in the usual way using 
cylinder sets \cite{kemeny2012denumerable}. 

We introduce the notion of rewards for an MDP.

\begin{definition}[Reward]
\label{def:MDPreward}
A \emph{reward function} for an MDP is a function ${\it rew} \colon S \ra \mathbb{R}_{\geq 0}$.
For a path $\pi$ and a set of states $F$, the reward until $F$ is reached is $$R(F,\pi) := \sum_{i=0}^{\min\{j | \pi(j)\in F\}} rew(\pi(i))$$ if 
the minimum exists, and $\infty$ otherwise. Given a scheduler $S$, the expected reward to reach a set of states $F$ is defined as
	$$E^S(F) := \int_{\pi \in \mathit{Paths}^S} R(F,\pi) \mathrm{d}\mathit{Prob}^S.$$
\end{definition}

\subsection{Syntax and Semantics of Probabilistic Workflow Nets}
\label{subsec:syn}

We introduce Probabilistic Workflow Nets with Rewards, just called Probabilistic Workflow Nets or PWNs in the rest of the paper.

\begin{definition}[Probabilistic Workflow Net with Rewards]
A {\em Probabilistic Workflow Net with Rewards}(PWN) is a tuple $(P,T,F,i,o,w,r)$ where \\$(P,T,F,i,o)$ is a 1-safe confusion-free workflow net, and $w,r \colon T \ra \R^+$ are a {\em weight function} and a {\em reward function}, respectively.
\end{definition}

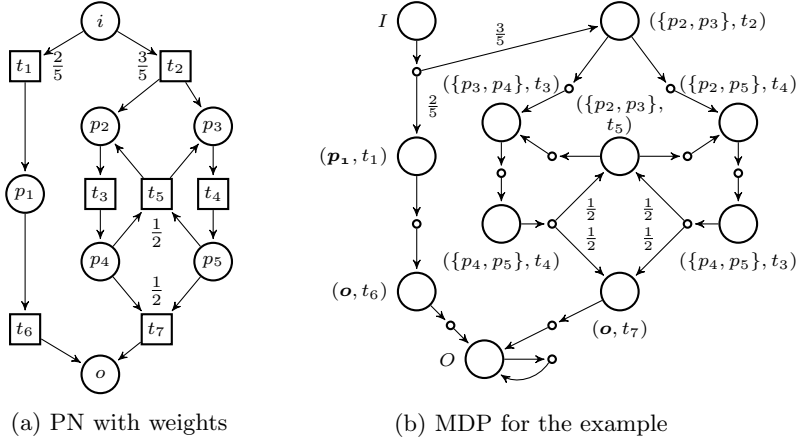
\begin{figure}
\begin{subfigure}[c]{0.4\textwidth}%
\centering%
\begin{tikzpicture}[>=stealth',bend angle=45,auto]
	% \tikzstyle{every node}=[font=\scriptsize]
	\tikzstyle{place}=[circle,thick,draw=black,fill=white,minimum size=5mm,inner sep=0mm,font=\scriptsize]
	\tikzstyle{transition}=[rectangle,thick,draw=black,fill=white,minimum size=4mm,inner sep=0mm,font=\scriptsize]
	\tikzstyle{every label}=[black]
	
	\node [place] at (-.5,-0.4) (c0){$i$};
	\node [place] at (-1.5,-2.7) (c1){$p_1$};
	\node [place] at (-0.5,-1.8) (c2){$p_2$};
	\node [place] at (1,-1.8) (c3){$p_3$};
	\node [place] at (-0.5,-3.6) (c4){$p_4$};
	\node [place] at (1,-3.6) (c5){$p_5$};
	\node [place] at (-.5,-5.1) (o){$o$};

	\node [transition] at (-1.5,-1) (t1) {$t_1$}
	edge [pre] (c0)
	edge [post] (c1);

	\node [transition] at (0.5,-1) (t2) {$t_2$}
	edge [pre] (c0)
	edge [post] (c2)
	edge [post] (c3);

	\node [transition] at (-0.5,-2.7) (t3) {$t_3$}
	edge [pre] (c2)
	edge [post] (c4);

	\node [transition] at (1,-2.7) (t4) {$t_4$}
	edge [pre] (c3)
	edge [post] (c5);

	\node [transition] at (0.25,-2.7) (t5) {$t_5$}
	edge [pre] (c4)
	edge [pre] (c5)
	edge [post] (c2)
	edge [post] (c3);

	\node [transition] at (-1.5,-4.5) (t6) {$t_6$}
	edge [pre] (c1)
	edge [post] (o);

	\node [transition] at (0.25,-4.5) (t7) {$t_7$}
	edge [pre] (c4)
	edge [pre] (c5)
	edge [post] (o);;

	%weights
	\node [right=0cm of t1] {$\frac{2}{5}$};
	\node [left=0cm of t2] {$\frac{3}{5}$};
	\node [below=0cm of t5] {$\frac{1}{2}$};
	\node [above=0cm of t7] {$\frac{1}{2}$};

\end{tikzpicture}%
\caption{PN with weights}%
\label{fig:example2}%
\end{subfigure}
\begin{subfigure}[c]{0.48\textwidth}%
\centering%
\begin{tikzpicture}[>=stealth',scale=0.9]
\tikzstyle{every node}=[font=\scriptsize]
\tikzstyle{place}=[circle,thick,draw=black,fill=white,minimum size=5mm,inner sep=0mm]
\tikzstyle{transition}=[circle,thick,draw=black,fill=white,minimum size=1mm,inner sep=0mm]

\node [place] at (-0,-0) (c0){};
\node[left=0cm of c0]{$I$};
\node [place] at (-0,-2) (c1){};
\node[left=0cm of c1]{$(\marking{p_1},t_1)$};
\node [place] at (-0,-4) (c2){};
\node[left=0cm of c2]{$(\mO,t_6)$};
\node [place] at (3,-0) (c3){};
\node[right=0cm of c3]{$(\{p_2,p_3\},t_2)$};
\node [place] at (4.75,-3) (c4){};
\node[below=0cm of c4]{$(\{p_4,p_5\},t_3)$};
\node [place] at (1.25,-3) (c5){};
\node[below=0cm of c5]{$(\{p_4,p_5\},t_4)$};
\node [place] at (3,-4) (c6){};
\node[below=0cm of c6]{$(\mO,t_7)$};

\node [place] at (4.75,-1.5) (c7){};
\node[above=0cm of c7]{$(\{p_2,p_5\},t_4)$};

\node [place] at (1.25,-1.5) (c8){};
\node[above=0cm of c8]{$(\{p_3,p_4\},t_3)$};

\node [place] at (3,-2) (c9){};
\node[above=0cm of c9,outer sep=-3pt]{\begin{tabular}{c}$(\{p_2,p_3\},$\\$t_5)$\end{tabular}};

\node [place] at (1,-5) (c10){};
\node[left=0cm of c10]{$O$};

\node [transition] at (-0,-0.75) (t1) {}
edge [pre] (c0)
edge [post] node[auto] {$\frac{2}{5}$} (c1)
edge [post] node[auto,outer sep=-2pt] {$\frac{3}{5}$}  (c3);

\node [transition] at (-0,-3) (t2) {}
edge [pre] (c1)
edge [post] (c2);

\node [transition] at (4,-3) (t3) {}
edge [pre] (c4)
edge [post] node[auto,outer sep=-2pt] {$\frac{1}{2}$} (c9)
edge [post] node[auto,swap,outer sep=-2pt] {$\frac{1}{2}$} (c6);

\node [transition] at (2,-3) (t4) {}
edge [pre] (c5)
edge [post] node[auto,swap,outer sep=-2pt] {$\frac{1}{2}$} (c9)
edge [post] node[auto,outer sep=-2pt] {$\frac{1}{2}$} (c6);

\node [transition] at (3.75,-1) (t5) {}
edge [pre] (c3)
edge [post] (c7);

\node [transition] at (2.25,-1) (t6) {}
edge [pre] (c3)
edge [post] (c8);

\node [transition] at (4.75,-2.25) (t7) {}
edge [pre] (c7)
edge [post] (c4);

\node [transition] at (1.25,-2.25) (t8) {}
edge [pre] (c8)
edge [post] (c5);

\node [transition] at (4,-2) (t9) {}
edge [pre] (c9)
edge [post] (c7);

\node [transition] at (2,-2) (t10) {}
edge [pre] (c9)
edge [post] (c8);

\node [transition] at (0.5,-4.5) (t11) {}
edge [pre] (c2)
edge [post] (c10);

\node [transition] at (2,-4.5) (t12) {}
edge [pre] (c6)
edge [post] (c10);

\node [transition] at (2,-5) (t13) {}
edge [pre] (c10)
edge [post,bend left=45] (c10);

\node[inner sep=0,outer sep=0] at (0,-5.5){};

\end{tikzpicture}%
\caption{MDP for the example}%
\label{fig:example2MDP}%
\end{subfigure}
\caption{Running example}
\end{figure}

Figure \ref{fig:example2} shows a free-choice PWN. 
All transitions have reward 1, and so only the weights are represented.
Unlabeled transitions have weight 1.

The semantics of a PWN is an MDP with a reward function. Intuitively, the states of the MDP are pairs $(M, t)$, where $M$ is a marking, and $t$ is the transition that was fired to reach $M$ (since the same marking can be reached by firing different transitions, the MDP can have states 
$(M,t_1)$, $(M,t_2)$ for $t_1 \neq t_2$). Additionally there is a distinguished initial and final states $I, O$. 
The transition relation
${\it Steps}$ is independent of the transition $t$, i.e., 
${\it Steps}((M, t_1)) = {\it Steps}((M,t_2))$ for any two transitions 
$t_1, t_2$, and the reward of a state $(M, t)$ is 
the reward of the transition $t$. Figure \ref{fig:example2MDP} shows
the MDP of the PWN of Figure \ref{fig:example2}, representing 
only the states reachable from the initial state.

\begin{definition}[Probability distribution]
Let $\W=(P, T, F, i, o, w, r)$ be a PWN, let $M$ be a 1-safe marking of $\W$ 
enabling at least one transition, and let $C$ be a conflict set enabled at $M$. 
The {\em probability distribution $P_{M, C}$ over $T$} is obtained by normalizing the weights of the transitions in $C$, and assigning probability $0$ to all other transitions. 
\end{definition}

\begin{definition}[MDP and reward function of a PWN]
\label{def:mdpsem}
Let $\W=(P,T,F,$\\$i,o,w,r)$ be a PWN.
The MDP $M_\W = (Q,q_0,{\it Steps})$ of $\W$ is defined as follows:
\begin{itemize}
\item $Q=({\cal M} \times T) \cup \{I,O\}$ where ${\cal M}$ are the 1-safe markings of 
$\W$, and $q_0 = I$.
\item For every transition $t$: 
\begin{itemize}
\item ${\it Steps}((\mO,t))$ contains exactly one distribution, which
   assigns probability 1 to state $o$, and probability $0$ to all other states.
\item For every marking $M \neq \mO$ enabling no transitions, ${\it Steps}((M,t))$ 
contains exactly one distribution, which assigns probability 1 to $(M,t)$, and probability $0$ to all other states.
\item For every marking $M$ enabling at least one transition,
   ${\it Steps}((M,t))$ contains a distribution $\mu_C$ for each conflict set $C$ of transitions 
   enabled at $M$. The distribution $\mu_C$ is defined as follows. 
   For the states $I, O$: $\mu_C(I)=0=\mu_C(O)$. For each state $(M', t')$ such that $t' \in C$ and $M\by{t'}M'$: $\mu_C((M',t'))=P_{M,C}(t')$. For all other states $(M', t')$: $\mu_C((M',t'))=0$. 
\item ${\it Steps}(I) = {\it Steps}((\mI, t))$ for any transition $t$.
\item ${\it Steps}(O) = {\it Steps}((\mO,t))$ for any transition $t$. 
\end{itemize}
\end{itemize}
The {\em reward function} ${\it rew}_\W$ of $\W$ is defined by: ${\it rew}_\W(I) = 0 = {\it rew}_\W(O)$,
and ${\it rew}_\W((M, t)) = r(t)$.
\end{definition}

In Figure \ref{fig:example2}, ${\it Steps}(i)$ is a singleton set that contains the probability distribution which assigns probability $\frac{2}{5}$ to the state $(\marking{p_1},t_1)$ and probability $\frac{3}{5}$ to the state $(\{p_2,p_3\},t_2)$. ${\it Steps}((\{p_2,p_3\},t_2))$ contains two probability distributions, one that assigns probability $1$ to $(\{p_5,p_3\},t_4)$ and one that assigns probability $1$ to $(\{p_2,p_6\},t_4)$.

We establish a correspondence between firing sequences and paths of the MDP.

\begin{definition}
Let $\W$ be a PWN, and let $M_\W$ be its associated MDP.
Let $\sigma = t_1 t_2\ldots t_n$ be a firing sequence of $\W$. The path $\Pi(\sigma)$ of  $M_\W$ 
corresponding to $\sigma$ is 
$\pi_\sigma = I \by{\mu_0}(M_1,t_1)\by{\mu_1}(M_2,t_2)\by{\mu_2}\ldots$, where $M_0 = \mI$ and 
for every $1 \leq k$:
\begin{itemize}
 \item $M_k$ is the marking reached by firing $t_1 \ldots t_{k}$ from $\mI$, and
 \item $\mu_k$ is the unique distribution of ${\it Steps}(M_{k-1}, t_{k-1})$ such that $\mu(t_k)>0$.
 \end{itemize}
 Let $\pi = I  \by{\mu_0} (M_1, t_1) \cdots (M_n, t_n)$ be a path of $M_\W$. 
The sequence $\Sigma(\pi)$ corresponding to $\pi$ is  $\sigma_\pi = t_1 \ldots t_n$.
 \end{definition}
 
It follows immediately from the definition of $M_\W$ that the functions $\Pi$ and $\Sigma$ are inverses of each other. For a path $\pi$ of the MDP that ends in state $last(\pi)$, the distributions in ${\it Steps}(last(\pi))$ are obtained from the conflict sets enabled after $\Sigma(\pi)$ has fired, if any. If no conflict set is enabled the choice is always trivial by construction. Therefore, a scheduler of the MDP $\M_W$ can be equivalently defined as a function that 
assigns to each firing sequence $\sigma \in T^*$ one of the conflict sets enabled after $\sigma$ has fired. 
In our example, after $t_2$ fires, the conflict sets $\{t_3\}$ and $\{t_4\}$ are concurrently enabled. 
A scheduler chooses either $\{t_3\}$ or $\{t_4\}$. A possible scheduler always chooses 
$\{t_3\}$ every time the marking $\{p_2, p_3\}$ is reached, and produces sequences in 
which $t_3$ always occurs before $t_4$, while others may behave differently.

\medskip
\noindent{\bf Convention:} In the rest of the paper we define schedulers as functions
from firing sequences to conflict sets.
%From now on we will mostly not distinguish between net schedulers and MDP schedulers and 
%simple use the term (partial) scheduler, and will convert them from one type of scheduler to 
%the other whenever necessary without explicit notice.
\medskip

In particular, this definition allows us to define the {\em probabilistic language} of a scheduler 
as the function that assigns to each finite firing sequence $\sigma$ the probability of the cylinder of all paths that 
``follow'' $\sigma$. Formally: 

\begin{definition}[Probabilistic language of a scheduler \cite{VN}] 
	The {\em probabilistic language $\nu_S$} of a scheduler $S$
    is the function 
    $\nu_S \colon T^* \rightarrow \R^+$ %(the function $\nu_S \colon T^{\leq n} \rightarrow \R^+$) 
    defined by $\nu_S(\sigma) = \mathit{Prob}^S(cyl^S(\Pi(\sigma)))$.
    A transition sequence $\sigma$ is \emph{produced} by $S$ if $\nu_S(\sigma)>0$.
\end{definition}

The reward function $r$ extends to transition sequences in the natural way by taking the sum of all rewards. 
When we draw a PWN, the labels of transitions have the form $(w,c)$ where $w$ is 
the weight and $c$ is the reward of the transition. See for example Figure \ref{fig:summary:shortcut1}.

We now introduce the expected reward of a PWN under a scheduler.

\begin{definition}[Expected reward of a PWN under a scheduler]
Let $\W$ be a PWN, and let $S$ be a scheduler of its MDP $M_\W$. The expected reward 
$V^S(\W)$ of $\W$ under $S$ is the expected reward $E^S(O)$ to reach the final state $O$ of $M_\W$ .
\end{definition}

%To show that the expected reward can be computed directly on the net, we establish a correspondence between firing sequences and paths in the MDP.

%\begin{definition}
% Let $\sigma = t_1t_2t_3\ldots$ be a firing sequence of $\W$. The path $\pi_\sigma$ corresponding to $\sigma$ in $M_\W$ is the path $(q_0,\epsilon)\by{\mu_0}(M_1,t_1)\by{\mu_1}(M_2,t_2)\by{mu_2}\ldots$.
%\end{definition}
%
%Note that $M_1$, $M_2$, $\ldots$ are the markings reached after firing 
%$t_1$, $t_2$, $\ldots$. The $\mu_i$ are also uniquely determined because at 
%marking $M_{i-1}$ only the probability distribution of the conflict set 
%containing $t$ produces a distribution with $\mu(t)>0$.

Given a firing sequence $\sigma$, we have $r(\sigma)=R(O, \Pi(\sigma))$ by the definition of the reward function 
and the fact that $O$ can only occur at the very end of $\pi_\sigma$.

\begin{lemma}
\label{lem:valueSound}
Let $\W$ be a sound PWN, and let $S$ be a scheduler. Then $V^S(\W)$ is finite and 
$V^S(\W)=\sum_{\pi\in\Pi}R(O,\pi)\cdot \mathit{Prob}^S(cyl^S(\pi)) = \sum_{\sigma\in \mathit{Fin_\W}}r(\sigma)\cdot \nu_S(\sigma)$, where $\Pi_{O}$ are the paths of the MDP $M_\W$ leading from the initial state $I$ to the state $O$ (without looping in $O$).
\end{lemma}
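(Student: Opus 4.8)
The plan is to prove the two assertions --- finiteness of $V^S(\W)$ and the two summation formulas --- in three stages: first reduce everything to the single fact that the MDP $M_\W$ reaches the state $O$ almost surely and with finite expected hitting time under every scheduler; then establish this termination fact from soundness and confusion-freeness; and finally expand the defining integral of $E^S(O)$ into the two stated sums by a routine decomposition over first-hit cylinders together with the bijection $\Pi/\Sigma$.

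For the reduction I would first note that $r$ takes finitely many values, so $r_{\max} := \max_{t \in T} r(t)$ is finite and ${\it rew}_\W$ is bounded by $r_{\max}$ (with ${\it rew}_\W(I) = {\it rew}_\W(O) = 0$). Writing $\tau(\pi)$ for the first index with $\pi(\tau) = O$, we have $R(O,\pi) = \sum_{i=0}^{\tau(\pi)} {\it rew}_\W(\pi(i)) \le r_{\max}\,\tau(\pi)$ when $\tau(\pi) < \infty$, and $R(O,\pi) = \infty$ otherwise. Hence $V^S(\W) = E^S(O)$ is finite as soon as $O$ is reached almost surely and $\tau$ has finite expectation under $\mathit{Prob}^S$. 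Both follow from a single uniform bound: there exist $N \in \Nat$ and $p > 0$ such that from every reachable state of $M_\W$, under every scheduler, $O$ is reached within $N$ steps with probability at least $p$. This yields $\mathit{Prob}^S(\tau > kN) \le (1-p)^k$, so $\tau < \infty$ almost surely and $E^S[\tau] \le N/p < \infty$.

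The heart of the argument, and the step I expect to be the main obstacle, is establishing this uniform bound. Since the net is $1$-safe it has finitely many reachable markings, so $M_\W$ has finitely many reachable states and one may take $N$ to be their number. By a standard characterisation for finite MDPs, the uniform bound holds iff there is no reachable \emph{end component} of $M_\W$ avoiding $O$, i.e. no set of states (none equal to $O$) on which some fixed choice of conflict sets can be followed forever. I would rule such an end component out using soundness together with confusion-freeness. The key observation, via Proposition~\ref{prop:fc} and the cluster structure, is that every transition consuming a token from a place $p$ lies in the \emph{same} conflict set; consequently a token can only leave $p$ by firing $p$'s conflict set, and whenever a scheduler selects that conflict set each of its transitions --- in particular any transition moving the marking towards $\mO$ along a soundness-guaranteed shortest firing sequence --- fires with strictly positive probability, since all weights in $w$ are strictly positive. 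Thus no set of markings closed under the available conflict-set choices can simultaneously avoid $\mO$ and respect the reachability of $\mO$ from every reachable marking demanded by soundness; I would formalise this by taking a marking that recurs along a hypothetical $D$-run, invoking soundness to obtain an enabled conflict set on a shortest path to $\mO$, and deriving a contradiction from the positive-probability escape just described.

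Finally, with almost-sure termination in hand, I would expand the integral. The paths in $\Pi_O$ (those reaching $O$ for the first time at their last state) induce first-hit cylinders $cyl^S(\pi)$ that are pairwise disjoint and, because $O$ is reached almost surely, cover $\mathit{Paths}^S$ up to a null set; moreover $R(O,\cdot)$ is constant and equal to $R(O,\pi)$ on $cyl^S(\pi)$, since the reward accumulated up to the first visit of $O$ depends only on the prefix ending there. Countable additivity (there are only countably many finite paths) then gives $V^S(\W) = \sum_{\pi \in \Pi_O} R(O,\pi)\cdot \mathit{Prob}^S(cyl^S(\pi))$, the first equality. The second equality is a term-by-term rewriting: the maps $\Pi$ and $\Sigma$ restrict to a bijection between $\Pi_O$ and $\mathit{Fin_\W}$, the remark preceding the lemma gives $R(O,\Pi(\sigma)) = r(\sigma)$, and $\nu_S(\sigma) = \mathit{Prob}^S(cyl^S(\Pi(\sigma)))$ by definition, so each summand over $\Pi_O$ matches the corresponding summand over $\mathit{Fin_\W}$.
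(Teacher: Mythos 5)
Your overall route is the paper's own: reduce finiteness to almost-sure reachability of $O$ with geometrically bounded hitting time, then expand the defining integral over the pairwise disjoint first-hit cylinders $cyl^S(\pi)$, $\pi\in\Pi_O$, on which $R(O,\cdot)$ is constant, and carry the sum through the bijection $\Pi/\Sigma$ using $r(\sigma)=R(O,\Pi(\sigma))$. Your first and third stages are fine, and in fact more explicit than the paper, which merely asserts that ``the probability to reach the final marking from any given marking can be bounded away from zero'' and silently uses the resulting geometric tail together with the boundedness of the reward per step.

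The soft spot is exactly the step you flag as the main obstacle, and as written your contradiction would not go through. In a hypothetical end component $D$ avoiding $O$, the only constraint is on the conflict sets the scheduler actually \emph{selects}; soundness hands you an enabled conflict set lying on a shortest firing sequence to $\mO$, but nothing forces the scheduler ever to select that set --- it may cycle forever among concurrent parts of the net, starving the very cluster whose firing your ``positive-probability escape'' requires. (This is why ``$\mO$ is reachable from every reachable marking'' does not, in a general MDP, exclude end components.) The repair uses your own key observation, aimed at the \emph{selected} set rather than the shortest-path one: whichever conflict set $C$ the scheduler selects at $M$, the tokens on the input places of $C$ must be consumed by any sequence $M\by{\sigma}\mO$ (those places are marked and distinct from $o$); the first consuming transition in $\sigma$ belongs to $C$, since confusion-freeness keeps the conflict set stable along the independent firings preceding it; and by 1-safety those preceding transitions neither consume from nor produce into the already marked input places of $C$, so the transition commutes to the front of $\sigma$ without increasing its length. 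Hence \emph{every} selected conflict set contains a transition, of probability at least the minimal normalized weight, that strictly decreases the distance to $\mO$; this yields your uniform $(N,p)$ bound directly and makes the end-component detour unnecessary. Note also that here you should appeal to confusion-freeness and the conflict-set equivalence from the Varacca--Nielsen lemma rather than to Proposition~\ref{prop:fc}, since the lemma is stated for all sound PWNs, not only free-choice ones. Once patched, your argument is, if anything, more complete than the paper's, which leaves this scheduler-adversarial point entirely implicit.
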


\begin{proof}
	By definition, $V^S(\W)=E^S(O)=\int_{\pi \in \mathit{Paths}^S} R(O,\pi) \mathrm{d}\mathit{Prob}^S.$
	Since $\W$ is sound, the final marking is reachable from every marking. Furthermore, since the weights are all positive, and the marking graph is finite, the probability to reach the final marking from any given marking can be bounded away from zero. Therefore the probability to eventually reach the final marking is equal to one, and so $O$ is the only absorbing state of the Markov chain induced by the scheduler $S$.
	It thus holds that
	$$\int_{\pi \in \mathit{Paths}^S} R(O,\pi) \mathrm{d}\mathit{Prob}^S=\int_{\pi \in cyl^S(\Pi_{o})} R(O,\pi) \mathrm{d}\mathit{Prob}^S.$$
	Furthermore, for a path $\pi\in\Pi_{O}$, it holds that $R(O,\pi) = R(O,\pi')$ for all $\pi'\in cyl^S(\pi)$ because $last(\pi)=O$. We obtain
	$$\int_{\pi \in cyl^S(\Pi_{O})} R(O,\pi) \mathrm{d}\mathit{Prob}^S = \sum_{\pi \in \Pi_{O}} R(O,\pi)\cdot \mathit{Prob}^S(cyl^S(\pi))$$ and therefore the first equality.
	Together with $r(\sigma)=R(O,\Pi(\sigma))$, the fact that $\Pi$ is a bijection between $\Pi_O$ and $Fin_\W$, and the definition of $\nu_S$, the second equality follows.
	\qed
\end{proof}

\subsection{Expected Reward of a PWN}

Using a result by Varacca and Nielsen \cite{VN}, we prove 
that the expected reward of a PWN is the same for all schedulers,
which allows us to speak of ``the'' expected reward of a PWN.
We first define partial schedulers.

\begin{definition}[Partial schedulers]
	A {\em partial scheduler} of length $n$ is the restriction of a scheduler 
	to firing sequences of length less than $n$. Given two partial schedulers $S_1, S_2$ of lengths $n_{S_1}, n_{S_2}$, we say
	that $S_1$ {\em extends} $S_2$ if $n_{S_1} \geq n_{S_2}$ and $S_2$ is the restriction of $S_1$ to firing sequences of length 
	less than $n_{S_2}$.
    The {\em probabilistic language $\nu_S$} of a partial scheduler $S$ of length $n$
    is the function $\nu_S \colon T^{\leq n} \rightarrow \R^+$ 
    defined by $\nu_S(\sigma) = \mathit{Prob}^S(cyl^S(\Pi(\sigma)))$.
	A transition sequence $\sigma$ is \emph{produced} by $S$ if $\nu_S(\sigma)>0$.
\end{definition}

Observe that if $\sigma$ is not a firing sequence, then $\nu_S(\sigma)=0$ for every scheduler $S$.
In our running example there are exactly two partial schedulers $S_1, S_2$ of length 2; after $t_2$ they choose
$t_3$ or $t_4$, respectively:
$$\begin{array}{ll}
S_1 \colon & \epsilon \mapsto \{t_1, t_2\} \quad t_1 \mapsto\{t_6\} \quad t_2 \mapsto \{t_3\}
\\%\quad t_1 t_6 \mapsto \emptyset \quad t_2t_3 \mapsto t_4 \\ 
S_2 \colon & \epsilon \mapsto \{t_1, t_2\} \quad t_1 \mapsto\{t_6\} \quad t_2 \mapsto \{t_4\}
\\%\quad t_1 t_6 \mapsto \emptyset \quad t_2t_4 \mapsto t_3 
\end{array}
$$
For example we have $\nu_{S_1}(t_2t_3)= 3/5$, and $\nu_{S_2}(t_2t_3)= 0$.

For finite transition sequences, Mazurkiewicz equivalence, denoted by $\equiv$, is the smallest 
congruence such that $\sigma t_1 t_2 \sigma' \equiv \sigma t_2 t_1 \sigma'$ for every $\sigma, \sigma' \in T^*$ and for any two {\em independent} transitions $t_1, t_2$ \cite{mazurkiewicz1986trace} . We extend Mazurkiewicz equivalence to 
partial schedulers.

\begin{definition}[Mazurkiewicz equivalence of partial schedulers]
\label{def:mazEquiv}
Given a partial scheduler $S$ of length $n$, we denote by $F_S$ the set of firing 
sequences $\sigma$ of $\W$ produced by $S$ such that either $|\sigma|=n$ or $\sigma$ leads to 
a marking that enables no transitions. 

Two partial schedulers $S_1,S_2$ with probabilistic languages $\nu_{S_1}$ and $\nu_{S_2}$
are {\em Mazurkiewicz equivalent}, denoted $S_1 \equiv S_2$, if they have the same length and there is a bijection 
$\phi \colon F_{S_1} \rightarrow F_{S_2}$ such that $\sigma \equiv \phi(\sigma)$ and 
$\nu_{S_1}(\sigma) = \nu_{S_2}(\phi(\sigma))$ for every $\sigma \in F_n$.
\end{definition}

The two partial schedulers of our running example are not Mazurkiewicz equivalent. Indeed, we 
have $F_{S_1} = \{ t_1t_6, t_2t_3\}$ and $F_{S_2} = \{ t_1t_6, t_2t_4\}$, and no bijection satisfies 
$\sigma \equiv \phi(\sigma)$ for every $\sigma \in F_{S_1}$.

We can now present the main result of \cite{VN}, in our 
terminology and for PWNs.\footnote{In \cite{VN}, enabled conflict sets are called actions,
and markings are called cases.}

\begin{theorem}[Equivalent extension of schedulers \cite{VN}\footnote{
	Stated as Theorem 2, the original paper gives this theorem with $S_1'$ and $S_2'$ being (non-partial) schedulers. However, in the paper equivalence is only defined for partial schedulers and the schedulers constructed in the proof are also partial. 
}]
\label{thm:schedulerEquiv}
Let $S_1$, $S_2$ be two partial schedulers. There exist two partial schedulers $S_1'$, $S_2'$ such that $S_1'$ extends $S_1$, $S_2'$ extends $S_2$ and $S_1'\equiv S_2'$.
\end{theorem}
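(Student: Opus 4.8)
The plan is to build the two common equivalent extensions by a \emph{mutual catch-up} construction, exploiting the fact that in a confusion-free net independent conflict sets commute and that an unresolved conflict set stays enabled until one of its transitions fires. Two facts are the workhorses. First, a \emph{persistence/diamond property}: by Proposition~\ref{prop:fc} and confusion-freeness, if two distinct conflict sets $C,C'$ are both enabled at a reachable marking $M$, then every transition of $C$ is independent of every transition of $C'$, so firing any transition $u\notin C$ leaves $C$ enabled with unchanged normalised weights, and for $t\in C$, $t'\in C'$ the markings reached by $M\by{t}\by{t'}$ and $M\by{t'}\by{t}$ coincide. Second, $\nu_S(\sigma)$ is the product over the steps of $\sigma$ of the normalised weights $P_{M,C}(t)$ of the transitions actually fired, so reordering two adjacent independent transitions changes neither the reached marking nor this product.

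The central difficulty, and the reason the construction is not a one-step confluence, is that an extension $S_i'$ of $S_i$ must \emph{agree} with $S_i$ on all firing sequences shorter than the length of $S_i$; we may only append choices, never rewrite the fixed prefixes. Those prefixes may resolve clusters in genuinely incompatible orders (as in the running example, where $S_1$ resolves $\{t_3\}$ before $\{t_4\}$ and $S_2$ the other way round). The strategy is therefore to let each scheduler catch up on the clusters the other has resolved early: beyond its fixed part, I extend $S_1$ so that, on every branch, it next resolves (in some fixed order) exactly those clusters that $S_2$ resolved within its fixed part but that this branch of $S_1$ has not yet resolved, and symmetrically for $S_2$. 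Persistence guarantees each such ``owed'' cluster is still enabled when we get to it, so these extensions are well defined. I then pick a single cut-off length $N$, larger than both original lengths plus the (finite) catch-up debt, at which every matched pair of branches has resolved one common set of clusters.

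The bijection $\phi\colon F_{S_1'}\to F_{S_2'}$ of Definition~\ref{def:mazEquiv} then matches an $S_1'$-run with the $S_2'$-run having the same \emph{joint resolution}, i.e.\ the same transition chosen for each of the commonly resolved clusters. Along a matched pair both runs are linearisations of the same partially ordered set of transition occurrences, so they differ only by transpositions of adjacent independent transitions and hence $\sigma\equiv\phi(\sigma)$; and since both fire the same transitions with the same per-cluster normalised weights, the products defining $\nu_{S_1'}(\sigma)$ and $\nu_{S_2'}(\phi(\sigma))$ coincide factor for factor. That $\phi$ is a genuine bijection, rather than a mere surjection, uses that a single scheduler produces at most one representative of each Mazurkiewicz class: if two produced sequences first diverge at some step, the two transitions offered there lie in the one conflict set the scheduler selected there, so they are in conflict and cannot be swapped, forcing the sequences to be equal.

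The main obstacle is reconciling this catch-up with the length cut-off built into $F_S$. Concretely, I expect the delicate bookkeeping to be: (i) choosing $N$ uniformly—this is fine because the fixed parts are finite, so the cluster debt along any branch is finite and discharged in boundedly many appended steps; (ii) handling branches that reach a marking enabling no transitions before step $N$, which must be matched with dead-ending branches of exactly the same length, requiring that both schedulers have resolved the same complete set of clusters on such a branch; and (iii) verifying that $\phi$ restricts to a bijection between the truncated and the dead-ending sequences in $F_{S_1'}$ and $F_{S_2'}$, not just a probability-preserving map on full runs. I would discharge these by carrying through the induction a strengthened invariant recording, for each matched pair of branches, the common reached marking together with the remaining cluster debt, and choosing $N$ as the maximum over the finitely many branches of the depth needed to bring every debt to zero.
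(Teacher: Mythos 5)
The paper itself contains no proof of this statement: it is imported from Varacca and Nielsen \cite{VN} (their Theorem~2), and the footnote merely records that the schedulers constructed in their proof are partial. So the comparison is with the cited proof, and your sketch follows essentially the same route as \cite{VN}: persistence of enabled conflict sets and stability of their normalised weights in confusion-free nets, a mutual ``catch-up'' extension in which each scheduler discharges the cluster resolutions the other has already committed to in its fixed part, and a bijection matching runs by their joint outcome profile, with injectivity supplied by the fact that a single scheduler produces at most one representative of each Mazurkiewicz class (your divergence argument for this is correct, since the two diverging transitions lie in the one selected conflict set). One point needs more care than your phrasing suggests: the ``debt'' of a branch of $S_1$ (``the clusters that $S_2$ resolved'') is only meaningful relative to a \emph{matched} branch of $S_2$, and when $S_1'$ resolves an owed cluster the probabilistic outcomes split the branch, so its successors must be re-matched to the sibling subtrees of $S_2$ that chose the corresponding outcomes (these exist because all weights are positive). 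Consequently the extension and the bijection cannot be built one after the other; they must be constructed in a single induction over matched pairs, with the matching re-partitioned at every catch-up step --- which is exactly what your closing invariant provides, so I read this as a presentation issue rather than a gap. Two further details to make explicit: matched partial branches do not share a marking until the debt is zero, so the invariant should record the marking reached by the common part together with the pending, pairwise independent resolutions on each side (your dead-end analysis then goes through: a dead-ended branch can be owed nothing, since an owed cluster would still be enabled by persistence); and after synchronisation both schedulers must continue \emph{identically} up to the common cut-off $N$, so that both the length-$N$ sequences and the shorter dead-ending sequences of $F_{S_1'}$ and $F_{S_2'}$ are matched, as Definition~\ref{def:mazEquiv} requires equal lengths for the two partial schedulers and $\sigma \equiv \phi(\sigma)$ forces matched sequences to have equal length and reach the same marking.
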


In our example, $S_1$ can be extended to $S_1'$ by adding $t_1 t_6 \mapsto \emptyset$ and 
$t_2t_3 \mapsto t_4$, and $S_2$ to $S_2'$ by adding  $t_1 t_6 \mapsto \emptyset$ and 
$t_2t_4 \mapsto t_3$. Now we have $F_{S_1'} = \{t_1t_6,t_2t_3t_4\}$ and $F_{S_2'} = \{t_1t_6,t_2t_4t_3\}$. The 
obvious bijection shows $S_1'\equiv S_2'$, because we have $t_2t_3t_4 \equiv t_2 t_4 t_3$ and $\nu_{S_1'}(t_2t_3t_4)= 3/5 =\nu_{S_2}(t_2t_4t_3)$.

We now prove that the expected reward of a PWN is independent of the scheduler.
We need a preliminary  proposition, which follows immediately from the definition of Mazurkiewicz 
equivalence and the commutativity of addition.

\begin{proposition}
	\label{prop:equivValue}
	Let $\W$ be a PWN.
	Then for any two firing sequences $\sigma$ and $\tau$ that are Mazurkiewicz equivalent, it holds that $r(\sigma)=r(\tau)$.
\end{proposition}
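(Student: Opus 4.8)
The plan is to exploit the fact that the reward of a sequence is, by definition, the sum of the rewards of its constituent transitions, and that this sum depends only on the \emph{multiset} of transitions appearing in the sequence, not on their order. Since Mazurkiewicz equivalence only ever permutes transitions (swapping adjacent independent ones), it preserves this multiset, and commutativity of addition finishes the job.

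Concretely, I would first record that $r$ extends to sequences additively, so that $r(\sigma\sigma') = r(\sigma) + r(\sigma')$ for all $\sigma, \sigma' \in T^*$ (and $r(\epsilon) = 0$); in other words, $r$ is a monoid homomorphism from $(T^*,\cdot)$ into $(\R,+)$. Next I would check that the single generating step of $\equiv$ preserves $r$: for any $\sigma, \sigma'$ and independent transitions $t_1, t_2$,
$$r(\sigma t_1 t_2 \sigma') = r(\sigma) + r(t_1) + r(t_2) + r(\sigma') = r(\sigma) + r(t_2) + r(t_1) + r(\sigma') = r(\sigma t_2 t_1 \sigma'),$$
where the middle equality is just commutativity of addition.

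To lift this from the generating step to the whole congruence $\equiv$, I would introduce the relation $\sim$ defined by $\sigma \sim \tau$ iff $r(\sigma) = r(\tau)$, and observe that $\sim$ is itself a congruence on $T^*$: it is clearly an equivalence relation, and because $r$ is a homomorphism into the commutative monoid $(\R,+)$, the equalities $r(\sigma)=r(\tau)$ and $r(\sigma')=r(\tau')$ imply $r(\sigma\sigma') = r(\tau\tau')$, so $\sim$ is compatible with concatenation. The computation above shows that $\sim$ contains every generating pair of $\equiv$. Since $\equiv$ is by definition the smallest congruence containing those pairs, we obtain $\equiv\, \subseteq\, \sim$, which is exactly the claim: $\sigma \equiv \tau$ implies $r(\sigma) = r(\tau)$.

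There is no real obstacle here; the statement is essentially bookkeeping, as the paper itself signals by saying it follows ``immediately''. The only point that deserves a moment's care is the last step, namely justifying that single-step preservation of $r$ propagates to the full equivalence. I avoid an explicit induction over derivations by packaging this as the inclusion of one congruence in another, which is the cleanest way to invoke the ``smallest congruence'' definition of $\equiv$. An equivalent route would be a straightforward induction on the number of adjacent transpositions of independent transitions witnessing $\sigma \equiv \tau$.
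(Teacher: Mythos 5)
Your proof is correct and follows exactly the route the paper intends when it says the proposition ``follows immediately from the definition of Mazurkiewicz equivalence and the commutativity of addition'': rewards of sequences depend only on the multiset of transitions, and adjacent swaps of independent transitions preserve that multiset. Your congruence-inclusion packaging of the lifting step is a clean formalization of the same bookkeeping argument the paper leaves implicit.
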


\begin{theorem}
	\label{thm:expectedCost}
	Let $\W$ be a PWN. There exists a value $v$ such that for every scheduler $S$ of $M_\W$, 
	the expected reward $V^S(\W)$ is equal to $v$.
\end{theorem}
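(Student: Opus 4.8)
The plan is to fix two arbitrary schedulers $S_1,S_2$ and prove $V^{S_1}(\W)=V^{S_2}(\W)$; the existence of a common value $v$ then follows immediately. By Lemma~\ref{lem:valueSound} we may write $V^{S}(\W)=\sum_{\sigma\in \mathit{Fin_\W}} r(\sigma)\,\nu_S(\sigma)$, a nonnegative, absolutely convergent sum over the complete firing sequences. For a partial scheduler $S$ of length $n$ I would define its \emph{completed value} $c(S)$ as the sum of $r(\sigma)\nu_S(\sigma)$ ranging over the complete sequences in $F_S$. Since (by soundness) a reachable marking enables no transition exactly when it is the final marking $\mO$, a sequence of $F_S$ is complete precisely when it reaches $\mO$; hence $c(S^{(n)})$, where $S^{(n)}$ denotes the restriction of $S$ to firing sequences of length less than $n$, equals the truncation $\sum_{\sigma\in \mathit{Fin_\W},\,|\sigma|\le n} r(\sigma)\nu_S(\sigma)$ of $V^S(\W)$, and therefore $c(S^{(n)})\to V^S(\W)$ as $n\to\infty$.

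The first key step is that Mazurkiewicz equivalence preserves the completed value. If $S\equiv S'$ via a bijection $\phi\colon F_S\to F_{S'}$ with $\sigma\equiv\phi(\sigma)$ and $\nu_S(\sigma)=\nu_{S'}(\phi(\sigma))$, then, because Mazurkiewicz-equivalent sequences are obtained from one another by transposing independent transitions and so reach the same marking, $\sigma$ is complete iff $\phi(\sigma)$ is; thus $\phi$ restricts to a bijection between the complete sequences of $F_S$ and of $F_{S'}$. Combining this with Proposition~\ref{prop:equivValue}, which gives $r(\sigma)=r(\phi(\sigma))$, yields $c(S)=c(S')$. Now, for each $n$ I would restrict $S_1,S_2$ to partial schedulers $S_1^{(n)},S_2^{(n)}$ and apply Theorem~\ref{thm:schedulerEquiv} to obtain extensions $T_1^{(n)}\equiv T_2^{(n)}$; the previous observation then gives $c(T_1^{(n)})=c(T_2^{(n)})$ for every $n$.

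It remains to connect $c(T_i^{(n)})$ back to $V^{S_i}(\W)$, and this is where I expect the main obstacle. The extensions supplied by Theorem~\ref{thm:schedulerEquiv} agree with $S_i$ only on sequences of length less than $n$ and are otherwise arbitrary, so $c(T_i^{(n)})$ equals the truncation $\sum_{\sigma\in \mathit{Fin_\W},\,|\sigma|<n} r(\sigma)\nu_{S_i}(\sigma)$ plus a \emph{tail} contributed by sequences that complete only after step $n$ under the unknown extension. I must show this tail tends to $0$ as $n\to\infty$, \emph{uniformly} over all extensions. For this I would reuse the estimate behind Lemma~\ref{lem:valueSound}: soundness, finiteness of the reachability graph, and positivity of the weights guarantee that from every reachable marking $\mO$ is reached within a bounded number of steps with probability bounded away from zero, and this bound does not depend on the scheduler. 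Hence the probability of not having completed after $n$ steps decays geometrically, uniformly in the scheduler; and since there are finitely many transitions, the per-transition reward is bounded by some $r_{\max}$, so the expected reward accumulated by sequences completing after step $n$ is at most $r_{\max}$ times a uniformly vanishing tail of the length distribution. Consequently $c(T_i^{(n)})\to V^{S_i}(\W)$ for $i=1,2$. Since $c(T_1^{(n)})=c(T_2^{(n)})$ holds for every $n$, letting $n\to\infty$ gives $V^{S_1}(\W)=V^{S_2}(\W)$, as required.
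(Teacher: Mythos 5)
Your proposal is correct, but at the decisive step it takes a genuinely different route from the paper. The paper also fixes two schedulers $R,S$, restricts them to partial schedulers $R^k,S^k$, and applies Theorem~\ref{thm:schedulerEquiv} and Proposition~\ref{prop:equivValue}; however, it never needs your uniform tail estimate. It exploits two facts you state but do not fully use: Mazurkiewicz equivalence preserves both the length and the final marking of a sequence, and the probability of a \emph{complete} sequence of length exactly $k$ depends only on the scheduler's choices on prefixes of length less than $k$, hence is unaffected by the unknown extension. Restricting the bijection between $F_{R'}$ and $F_{S'}$ to complete sequences of length $k$ therefore yields, for every $k$, a reward- and probability-preserving bijection between the complete length-$k$ sequences produced by $R$ and by $S$ themselves; taking the union over $k$ and invoking Lemma~\ref{lem:valueSound} gives $V^R(\W)=V^S(\W)$ by a term-by-term reordering, with no limit or uniformity argument at all. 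You instead compare the full completed values $c(T_i^{(n)})$ of the extensions, which forces you to control the contribution of sequences completing between step $n$ and the extensions' horizon, uniformly over all possible extensions. That is repairable along the lines you sketch---soundness, finiteness of the reachable markings and positivity of the weights give a termination probability bounded below uniformly in the history, hence a geometric, scheduler-uniform decay of the probability of not having completed by step $n$; this is the same estimate that the proof of Lemma~\ref{lem:valueSound} already uses implicitly---but note that your bound ``at most $r_{\max}$ times the tail of the length distribution'' is slightly off: a sequence of length $\ell$ can carry reward up to $\ell\cdot r_{\max}$, so you need the tail of the series $\sum_{\ell>n}\ell\,\rho^{\ell}$, which still vanishes under geometric decay. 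Moreover, the whole tail machinery is avoidable inside your own setup: since your bijection preserves length, it restricts to complete sequences of length at most $n$, whose probabilities under $T_i^{(n)}$ coincide with those under $S_i$, giving equality of the truncations directly---which is in essence the paper's argument, so each approach buys something: yours is a robust analytic approximation argument, the paper's a sharper combinatorial one. Finally, like the paper's proof, yours really covers the sound case (both rest on Lemma~\ref{lem:valueSound} and on the fact that dead markings coincide with $\mO$), which matches the paper's implicit assumption.
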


\begin{proof}
	Pick any two schedulers $R$, $S$. We show that there is a bijection between Mazurkiewicz equivalent firing sequences that end in the final marking and that are produced by those schedulers.

	By Theorem \ref{thm:schedulerEquiv}, any two partial schedulers can be extended to two equivalent partial schedulers, in particular the partial schedulers $R^k$, $S^k$ that are the restrictions of $R$ and $S$ to firing sequences of length less than $k$.

	Let $R'$ be a partial scheduler extending $R^k$, $S'$ a partial scheduler extending $S^k$ such that $R'\equiv S'$. Let $\sigma$ be a firing sequence of length $k$ produced by $R
	$ that ends in the final marking. By the definition of equivalence, there is a firing sequence $\tau$ such that $\sigma\equiv\tau$ and $\nu_{R'}(\sigma)=\nu_{S'}(\tau)$. Since $\sigma$ and $\tau$ are Mazurkiewicz equivalent, $\tau$ also ends in the final marking and also has length $k$. Since $\sigma$ is of length $k$, it was already produced by $R^k$ and thus by $R$, and $\tau$ was already produced by $S$.

	Repeating this for every $k$, we can construct a bijection $\phi$ that maps every firing sequence $\sigma$ produced by $R$ that ends in the final marking to a Mazurkiewicz equivalent firing sequence $\phi(\sigma)$ of the same length produced by $S$ that ends in the final marking. 

	Using Proposition \ref{prop:equivValue}, we know that $r(\sigma)=r(\phi(\sigma))$. Now we apply Lemma \ref{lem:valueSound} and get:
	$$V^{R}(\W)=\sum_{\sigma\in\Sigma}r(\sigma)\cdot \nu_{R}(\sigma)=\sum_{\sigma\in\Sigma}r(\phi(\sigma))\cdot \nu_{S}(\phi(\sigma)) =\sum_{\sigma\in\Sigma}r(\sigma)\cdot \nu_{S}(\sigma) = V^{S}(\W)$$ 
\noindent where the third equality is just a reordering of the sum.
	\qed
\end{proof}

\subsection{Free-choice PWNs}
\label{subsec:free-choice}

By Proposition \ref{prop:fc}, in free-choice PWNs the conflict set of a given transition is exactly its cluster, and so its probability is always the same at any reachable marking that enables it. So we can label a transition directly with this probability.

\medskip

\noindent{\bf Convention:} From now on we assume that the weights are normalized for each cluster, i.e. the weights are already a probability distribution.

\medskip

In the next section we present a reduction algorithm that decides if a given free-choice PWN is sound or not, and if sound computes its expected reward. If the PWN is unsound, then we just apply the following lemma:

\begin{lemma}
\label{lem:valueUnsound}
The expected reward of an unsound free-choice PWN is infinite.
\end{lemma}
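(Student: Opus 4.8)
The plan is to show that under \emph{every} scheduler $S$ the process fails to reach the final state $O$ with positive probability, which by Definition~\ref{def:MDPreward} forces $V^S(\W)=\infty$. Indeed, $R(O,\pi)=\infty$ for every path $\pi$ that never visits $O$, so if the set of such paths has positive measure under $\mathit{Prob}^S$, the integral defining $E^S(O)=V^S(\W)$ is infinite (note that this needs no positivity of rewards, only the case split in the definition of $R$). Since $\mO$ is terminal, a path reaches $O$ exactly when its prefix up to $\mO$ is one of the firing sequences in $\mathit{Fin_\W}$, and these have pairwise disjoint cylinders; hence, writing $p^S$ for the probability of reaching $O$, we have $p^S=\sum_{\sigma\in \mathit{Fin_\W}}\nu_S(\sigma)$, and it suffices to prove $p^S<1$ for all $S$.

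First I would observe that $p^S$ is in fact independent of $S$. For this I reuse the bijection $\phi$ built inside the proof of Theorem~\ref{thm:expectedCost}: given any two schedulers $R,S$, applying Theorem~\ref{thm:schedulerEquiv} at each length $k$ and stitching the pieces together yields a length-preserving bijection $\phi$ between the sequences of $\mathit{Fin_\W}$ produced by $R$ and those produced by $S$, with $\sigma\equiv\phi(\sigma)$ and $\nu_R(\sigma)=\nu_S(\phi(\sigma))$. Crucially, this construction invokes only Theorem~\ref{thm:schedulerEquiv} and never soundness (soundness entered Theorem~\ref{thm:expectedCost} only afterwards, through Lemma~\ref{lem:valueSound}), so it remains valid for unsound nets. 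Summing $\nu_R(\sigma)=\nu_S(\phi(\sigma))$ over $\sigma$ gives $p^R=p^S$; call the common value $p$. It therefore remains only to exhibit one scheduler whose termination probability is strictly below $1$.

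This is where unsoundness is used. I would invoke the structural fact that an unsound free-choice workflow net has a reachable marking $M^{*}$ from which $\mO$ cannot be reached. Fix a firing sequence $\rho=u_1\cdots u_m$ with $\mI\by{\rho}M^{*}$, and let $S_0$ be any scheduler that along the prefixes of $\rho$ selects the cluster $[u_j]$ of the next transition (and is defined arbitrarily elsewhere). Because all cluster probabilities are strictly positive, the realised firing sequence agrees with $\rho$ with probability equal to the product of the within-cluster probabilities of the $u_j$, which is positive; and once $M^{*}$ is reached, $O$ can no longer be visited regardless of the future of $S_0$. Hence $p^{S_0}<1$, so $p<1$, and by the first paragraph every scheduler leaves a positive-measure set of paths never reaching $O$, giving $V^S(\W)=\infty$ for all $S$.

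The main obstacle is the structural input at the start of the third paragraph: that unsoundness of a free-choice workflow net always manifests as a reachable marking from which $\mO$ is unreachable. When the failing clause of soundness is ``option to complete'', this is immediate. The remaining case, a transition that is never enabled, must be reduced to it; for a strongly connected free-choice net a permanently dead transition likewise leaves the token in a marking from which $\mO$ cannot be reached, and pinning down this reduction cleanly is the one genuinely net-theoretic step of the argument. The only other points requiring (routine) care are the measurability of the event ``$O$ is never reached'' and the elementary fact that an integrand equal to $\infty$ on a set of positive measure makes $E^S(O)$ infinite.
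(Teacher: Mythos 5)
Your proposal is correct, and it differs from the paper's proof in both of its main steps. For the structural step, the paper does not argue directly about a marking from which $\mO$ is unreachable: it closes the net with a transition from $o$ back to $i$, invokes van der Aalst's Theorem~1 (unsoundness plus 1-safeness make the closed net non-live) and then Theorem~4.31 of Desel--Esparza to obtain a reachable \emph{deadlock} --- strictly stronger than your $M^{*}$ --- and this citation-based detour is exactly how the paper discharges the dead-transition case you flag as the one delicate point. Your sketched reduction does go through in the free-choice setting, and can be pinned down as follows: transitions sharing an input place have equal presets, so a never-enabled transition makes its whole cluster never enabled; since each place's postset is exactly one cluster, either some input place of a dead cluster becomes marked, stranding a token on a place $\neq o$ so that $\mO$ is unreachable from that marking, or deadness propagates backwards along a path from $i$ (which exists by strong connectedness) until it reaches the initially marked place $i$, again yielding a stuck token. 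This propagation is precisely what fails for merely confusion-free nets, cf.\ the net of Figure~\ref{fig:unsoundVal}, so your restriction to free-choice is essential and correctly placed. For the quantification over schedulers, the paper exhibits one scheduler reaching the deadlock with positive probability, notes the corresponding cylinder has infinite reward, and then appeals to Theorem~\ref{thm:expectedCost} to transfer infiniteness to all schedulers; your variant --- extracting from that theorem's proof the soundness-free bijection and applying it only to the termination probability $p^S$ --- is actually the more careful one, since Theorem~\ref{thm:expectedCost} as proved passes through Lemma~\ref{lem:valueSound}, which assumes soundness, so the paper's appeal to it for an unsound net is an extrapolation that your argument avoids. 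In short: the paper's route buys brevity (two citations replace your net-theoretic lemma), while yours buys self-containedness and a cleaner logical order; the one step you leave at sketch level is true and completable as above.
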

\begin{proof}
Let $\W=(P,T,F,i,o)$ be an unsound free-choice PWN. Since, by the definition of a workflow net, the graph $(P\cup T, F\cup(o, i))$ is strongly connected, if we add a transition to $\W$ with $o$ as input and $i$ as output transition, 
we obtain a strongly connected and 1-safe free-choice net $N$. 
Since $\W$ is unsound, by Theorem 1 of \cite{DBLP:journals/jcsc/Aalst98} the net $N$ with the marking $M_0$ that puts one token in place $i$
is either non-live or non-bounded,  and so, since $\W$ is 1-safe, it must be non-live. By Theorem 4.31 of \cite{desel2005free}, the net $N$ with $M_0$ as initial marking has a deadlock $M$, which clearly is also a deadlock of $\W$. Let $i \by{\sigma} M$ be an occurrence sequence leading to $M$. Choose a scheduler $S$ such that $\nu_S(\sigma)>0$. We show that 
the expected reward $V^S(\W)$ is infinite which,  by Theorem  \ref{thm:expectedCost}, implies that the expected reward is also infinite.

The cylinder of paths of $M_\W$ that extend the path $\pi_\sigma$ has positive probability and infinite reward
(by the definition of $\M_\W$ this is the cylinder of paths that extend $\pi_\sigma$ by staying in the
state $(M, t)$ forever, where $t$ is the last transition of $\sigma$ (or in state $i$ forever, if
$\sigma = \epsilon$). So the expected reward $V^S(\W)$ is also infinite. 
\qed
\end{proof}

\begin{figure}
\centering%
\begin{tikzpicture}[>=stealth',bend angle=45,auto]
	% \tikzstyle{every node}=[font=\scriptsize]
	\tikzstyle{place}=[circle,thick,draw=black,fill=white,minimum size=5mm,inner sep=0mm,font=\scriptsize]
	\tikzstyle{transition}=[rectangle,thick,draw=black,fill=white,minimum size=4mm,inner sep=0mm,font=\scriptsize]
	\tikzstyle{every label}=[black]
	
	\node [place] at (0,0) (c0){$i$};
	\node [place] at (0,-2) (c1){$p_1$};
	\node [place] at (0,-4) (c2){$o$};

	\node [transition] at (0,-1) (t1) {$t_1$}
	edge [pre] (c0)
	edge [post] (c1);

	\node [transition] at (0,-3) (t2) {$t_2$}
	edge [pre] (c1)
	edge [post] (c2);

	\node [transition] at (1,-2) (t3) {$t_3$}
	edge [pre] (c0)
	edge [pre] (c1)
	edge [post] (c2);

	%weights
	\node [left=0cm of t1] {$1$};
	\node [left=0cm of t2] {$1$};
	\node [right=0cm of t3] {$2$};

\end{tikzpicture}%
\caption{An unsound confusion free PWN}%
\label{fig:unsoundVal}%
\end{figure}
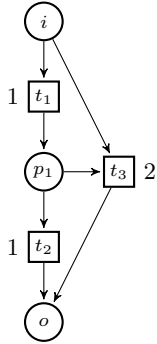

Notice that the above lemma is not true for confusion-free workflow nets, as can be seen in the example net in Figure \ref{fig:unsoundVal}. The transition $t_3$ can never be enabled and thus the net is unsound. However the net contains no deadlock and indeed the only maximal transition sequence is $t_1t_2$. Thus the value of the net is finite.

\section{Reduction rules}
\label{sec:rules}

We transform the reduction rules of  \cite{esparza2016reduction} for non-probabilistic (colored) workflow nets into rules for probabilistic workflow nets. 

\begin{definition}[Rules, correctness, and completeness]
A {\em rule} $R$ is a binary relation on the set of PWNs. We write $\W_1 \by{R} \W_2$ for $(\W_1,\W_2) \in R$. 

A rule $R$ is \emph{correct} if $\W_1 \by{R} \W_2$ implies that $\W_1$ and $\W_2$ are either both sound or both unsound, and have the same expected reward. 

A set $\mathcal{R}$ of rules is \emph{complete} for a class of PWNs if for every sound PWN $\W$ in that class there exists a sequence 
$\W \by{R_1} \W_1 \cdots \W_{n-1}\by{R_n} \W'$ such that $\W'$ 
is a PWN consisting of a single transition $t$ between the two only places $i$ and $o$. 
\end{definition}

Observe that if $\W$ is reduced to a $\W'$ as above, then the expected reward of $\W$ is equal to the reward of $t$ in $\W'$.

As in \cite{esparza2016reduction}, we describe rules as pairs of a \emph{guard} and an \emph{action}. 
$\W_1 \by{R} \W_2$ holds if $\W_1$ satisfies the guard, and $\W_2$ is a 
possible result of applying the action to $\W_1$. 

\paragraph*{Merge rule.} The {\em merge rule} merges two transitions with the same input and output places into one single transition. The weight of the new transition is the sum of the old weights, and the reward is the weighted average of the reward of the two merged transitions.

\reductionrule{Merge rule}{def:merge}
{
$\W$ contains two distinct transitions $t_1, t_2 \in T$ such that ${}^\bullet t_1 = {}^\bullet t_2$ and $t_1^\bullet = t_2^\bullet$.
}{
\vspace{-\topsep}
\begin{enumerate}[(1)]
\item $T := (T\setminus \{t_1,t_2\})\cup \{t_m\}$, where $t_m$ is a fresh name. 
\item $t_m^\bullet := t_1^\bullet$ and ${}^\bullet t_m := {}^\bullet t_1$. 
\item $r(t_m) := w(t_1)\cdot r(t_1)+w(t_2)\cdot r(t_2)$.
\item $w(t_m)=w(t_1) + w(t_2)$.
\end{enumerate}
}

\paragraph*{Iteration rule.} Loosely speaking, the iteration rule removes arbitrary iterations of a transition by adjusting the weights of the possible successor transitions. The probabilities are normalized again and the reward of each successor transition increases by a geometric series dependent on the reward and weight of the removed transition.

\reductionrule{Iteration rule}{def:iteration}
{
$\W$ contains a cluster $c$ with a transition $t \in c$ such that $t^\bullet={}^\bullet t$. 
}{
\vspace{-\topsep}
\begin{enumerate}[(1)]
\item $T := (T \setminus \{t\})$. 
\item For all $t' \in c \setminus\{t\}$: $r(t'):=\frac{w(t)}{1-w(t)}\cdot r(t)+r(t')$
\item For all $t' \in c \setminus\{t\}$: $w(t'):=\frac{w(t')}{1-w(t)}$
\end{enumerate}
}

Observe that $\frac{w(t)}{1-w(t)}\cdot r(t)=(1-w(t))\cdot\sum_{i=0}^\infty w(t)^i\cdot i\cdot r(t)$ captures the fact that $t$ can be executed arbitrarily often, each execution yields the reward $r(t)$, and eventually some other transition occurs.

For an example of an application of the iteration rule, consult Figure \ref{fig:summary:shortcut2} and Figure \ref{fig:summary:shortcut3}. Transition $t_9$ has been removed and as a result the label of transition $t_7$ changed.

\paragraph*{Shortcut rule.} The shortcut rule merges transitions of two clusters into one single transition with the same effect. The reward of the new transition is the sum of the rewards of the old transitions, and its weight the product of the old weights.

A transition $t$ {\em unconditionally enables} a cluster $c$
if ${}^\bullet t'\subseteq t^\bullet$ for some transition $t'\in c$. Observe that if $t$ unconditionally enables $c$ then any marking reached by firing $t$ enables every transition in $c$.

\reductionrule{Shortcut rule}{def:shortcut}
{
$\W$ contains a transition $t$ and a cluster $c\neq[t]$ such that $t$ unconditionally enables $c$.
}{
\vspace{-\topsep}
\begin{enumerate}[(1)]
\item $T := (T \setminus \{t\}) \cup \{t'_s \mid t' \in c\}$, where $t'_s$ are fresh names. 
\item For all $t' \in c$: ${}^\bullet t'_s := {}^\bullet t$ and $t'_s{}^\bullet := (t^\bullet \setminus {}^\bullet t')\cup t'^\bullet$. 
\item For all $t' \in c$: $r(t'_s) := r(t)+r(t')$.
\item For all $t' \in c$: $w(t'_s)=w(t)\cdot w(t')$.
\item If ${}^\bullet p = \emptyset$ for all $p\in c$, then remove $c$ from $\W$.
\end{enumerate}
}

For an example shortcut rule application, compare the example of Figure \ref{fig:example2} with the net in Figure \ref{fig:summary:shortcut1}. The transition $t_1$ which unconditionally enabled the cluster $[t_6]$ has been shortcut, a new transition $t_8$ has been created, and $t_1$, $p_1$ and $t_6$ have been removed.

\begin{theorem}
\label{thm:correctness}
The merge, shortcut and iteration rules are correct for PWNs.
\end{theorem}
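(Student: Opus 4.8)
The statement to prove is that each of the three rules is \emph{correct} in the sense of the preceding definition, i.e.\ that $\W_1 \by{R} \W_2$ forces $\W_1$ and $\W_2$ to be simultaneously sound or unsound and to have the same expected reward. The plan is to separate the two obligations. For the soundness obligation (and for the implicit fact that $\W_2$ is again a 1-safe free-choice PWN, so that the value theory applies to it), I would observe that the action of each rule on the underlying net $(P,T,F,i,o)$ is literally the action of the corresponding non-probabilistic rule of \cite{esparza2016reduction}; since soundness, 1-safeness and the free-choice property depend only on this structure, they transfer directly from the results proved there. For the reward obligation I would first dispose of the case in which both nets are unsound: by Lemma~\ref{lem:valueUnsound} both expected rewards are then $\infty$ and there is nothing to compare. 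In the remaining case both nets are sound, and here I would invoke the scheduler-independence of Theorem~\ref{thm:expectedCost}: it is enough to pick \emph{one} convenient scheduler $S_1$ of $\W_1$ and \emph{one} convenient scheduler $S_2$ of $\W_2$ and show $V^{S_1}(\W_1)=V^{S_2}(\W_2)$, since every other scheduler of either net gives the same value. By Lemma~\ref{lem:valueSound} this equality amounts to matching the two sums $\sum_\sigma r(\sigma)\,\nu_S(\sigma)$ taken over the firing sequences ending in the final marking.

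For the \textbf{merge rule} (Definition~\ref{def:merge}) scheduling plays no role, because $t_1$ and $t_2$ lie in the same cluster and are resolved probabilistically, not by the scheduler. I would take $S_1$ and $S_2$ to agree on every cluster other than $[t_1]$ and project every complete firing sequence of $\W_1$ onto one of $\W_2$ by replacing each occurrence of $t_1$ or $t_2$ with $t_m$. A $\W_2$-sequence carrying the merged transition at $k$ positions then has $2^k$ preimages. Grouping the preimages of a fixed $\W_2$-sequence, the probabilities of $t_1$ and $t_2$ sum to the probability of $t_m$ (this is why $w(t_m)=w(t_1)+w(t_2)$), and because $r(t_m)$ is the \emph{weighted average} of $r(t_1)$ and $r(t_2)$, the reward-times-probability contributions of each group add up to the contribution of the projected sequence. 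Summing over all groups yields the two equal values.

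For the \textbf{iteration} and \textbf{shortcut} rules the choice of scheduler is the whole point. For iteration (Definition~\ref{def:iteration}) I would choose the scheduler $S_1$ that, whenever the cluster $c$ is enabled, keeps selecting $c$ until a transition $t'\neq t$ fires; this segments every run into blocks $t^{k}t'$ with $k\ge 0$. Matching such a block against the single firing of $t'$ in $\W_2$, the geometric sum $\sum_{k\ge 0} w(t)^k w(t')$ of block probabilities equals the adjusted weight $w(t')/(1-w(t))$, while $\sum_{k\ge 0} w(t)^k w(t')\,(k\cdot r(t)+r(t'))$ equals the adjusted reward $\frac{w(t)}{1-w(t)}r(t)+r(t')$, exactly the values assigned by the rule. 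For shortcut (Definition~\ref{def:shortcut}) I would choose a scheduler $S_1$ that fires the cluster $c$ immediately after every firing of $t$; this is legitimate because $t$ unconditionally enables $c$ and, since $c$ is a cluster of a free-choice net, an enabled cluster stays enabled until one of its own transitions fires, so $c$ is still available right after $t$. Collapsing each adjacent pair $t\,t'$ into the single transition $t'_s$ gives a bijection onto the complete firing sequences of $\W_2$; the weight $w(t)\,w(t')=w(t'_s)$ and reward $r(t)+r(t')=r(t'_s)$ of the collapsed pair match by definition, and the marking reached by $t'_s$ equals the one reached by $t\,t'$ by the construction of $t'_s{}^\bullet$.

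The main obstacle is the concurrency handled by the shortcut rule: under an arbitrary scheduler the firing of $t$ and the firing of the $c$-transition it enables may be separated by unboundedly many independent transitions, so no literal projection of firing sequences exists. The resolution, and the reason Theorem~\ref{thm:expectedCost} is indispensable, is that we are free to restrict attention to the non-interleaving scheduler $S_1$ that fires $c$ right after $t$; scheduler-independence guarantees that this special scheduler computes the same value as any other. The remaining work is careful bookkeeping: checking that the chosen $S_1$ and $S_2$ really induce a bijection between the relevant firing sequences with $\nu_{S_1}(\sigma)=\nu_{S_2}(\phi(\sigma))$, and confirming in each case that $\W_2$ is again sound, 1-safe and free-choice so that Lemmas~\ref{lem:valueSound} and~\ref{lem:valueUnsound} may indeed be applied to it.
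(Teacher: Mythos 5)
Your overall strategy is exactly the paper's: import soundness preservation from \cite{esparza2016reduction}, dispose of the unsound case separately, and in the sound case exploit Theorem \ref{thm:expectedCost} to fix one convenient scheduler per net, build a map $\phi$ between firing sequences under which weights and rewards match by the rule definitions, and conclude via the sum representation of Lemma \ref{lem:valueSound}. Your concrete scheduler choices also coincide with the paper's (fire $c$ immediately after $t$ for the shortcut rule; always select $c$ for the iteration rule, with the block decomposition $t^k t'$ and the geometric series playing the role of the paper's grouping of preimages $\phi^{-1}(\sigma_2)$; grouping the $2^k$ preimages for the merge rule). However, there is one genuine misstep: you dispatch the unsound case by invoking Lemma \ref{lem:valueUnsound}, but that lemma is stated and proved only for \emph{free-choice} PWNs, while Theorem \ref{thm:correctness} is claimed for PWNs in general (i.e., confusion-free nets), and the paper explicitly exhibits a counterexample (Figure \ref{fig:unsoundVal}) showing the lemma fails for confusion-free nets: an unsound confusion-free PWN can have finite expected reward. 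The paper instead argues the unsound case directly, without the lemma: unsoundness gives a reachable marking from which the final marking is unreachable, hence a positive-probability cylinder of paths that never reach $O$, whose reward is infinite by Definition \ref{def:MDPreward}; since the rules preserve unsoundness, both sides are infinite. You should replace your citation of Lemma \ref{lem:valueUnsound} with this direct cylinder argument.

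A second, minor point: in your iteration computation the displayed identity is off by a normalization. The sum $\sum_{k\ge 0} w(t)^k\, w(t')\,\bigl(k\cdot r(t)+r(t')\bigr)$ does not equal the adjusted reward $\frac{w(t)}{1-w(t)}\,r(t)+r(t')$; it equals the adjusted weight times the adjusted reward, namely $\frac{w(t')}{1-w(t)}\Bigl(\frac{w(t)}{1-w(t)}\,r(t)+r(t')\Bigr)$, which is precisely what is needed to match the contribution $r_2(\sigma_2)\cdot\nu_{S_2}(\sigma_2)$ in $\W_2$ (equivalently, the \emph{conditional} expected block reward equals the adjusted reward). This is exactly the claim the paper proves for the sequences $\tau_l$, so your intended argument is sound; only the stated equation needs correcting. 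With these two repairs your proposal is essentially the paper's proof.
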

\begin{proof}
	It was already shown in \cite{esparza2016reduction} that the rules preserve soundness for free-choice workflow nets. We thus only have to show that the rules preserve the expected reward of the net. In the unsound case this is easy: Since there is a reachable marking from which the final marking is unreachable, there is a cylinder which occurs with positive probability and never reaches the final marking. For such a cylinder, the reward is infinite by Definition \ref{def:MDPreward}, thus the expected reward is infinite. As the rules preserve unsoundness, they also preserve the expected reward in that case.

	By Theorem \ref{thm:expectedCost} the expected reward of the net does not depend on the scheduler. We use this fact in the following way: For each rule, we pick two schedulers, one for the net before the rule application and one for the net after the rule was applied. These schedulers will be such that it is easy to show that their expected rewards are equal. We begin with the shortcut rule.

\paragraph{Shortcut rule.}

	Let $\W_1$, $\W_2$ be such that $\W_1\by{\text{shortcut}} \W_2$.
	Let $c$, $t$ be as in Definition \ref{def:shortcut}. Let $S_1$ be a scheduler for $W_1$ such that $S_1(\sigma_1)=c$ if $\sigma_1$ ends with $t$. Since $t$ unconditionally enables $c$, this is a valid scheduler.

	We define a mapping $\phi$ that maps firing sequences in $\W_2$ to firing sequences in $\W_1$ by replacing every occurrence of $t'_s$ by $t\,t'$. Next we define a scheduler $S_2$ for $\W_2$ by $S_2(\sigma_2) = S_1(\phi(\sigma_2))$.

	Observe that $\phi$ is a bijection between sequences produced by $S_1$ that do not end with $t$ and sequences produced by $S_2$. In particular $\phi$ is a bijection between sequences produced by $S_1$ and $S_2$ that end with the final marking.

	Let now $\sigma_2$ be a firing sequence in $\W_2$ and let $\sigma_1=\phi(\sigma_2)$. We claim that $\sigma_1$ and $\sigma_2$ have the same reward and also $\nu_{S_1}(\sigma_1)=\nu_{S_2}(\sigma_2)$. Indeed, since the only difference is that every occurrence of $t'_s$ is replaced by $t\,t'$ and $r(t'_S)=r(t)+r(t')$ and $w(t'_s)=w(t)w(t')$ by the definition of the shortcut rule, the reward must be equal and $\nu_{S_1}(\sigma_1)=\nu_{S_2}(\sigma_2)$.

	We now use these equalities, the fact that there is a bijection between firing sequences that end with the final marking, and Lemma \ref{lem:valueSound}:

	\begin{eqnarray*}
V(\W_2) & =  & \sum_{\sigma_2 \in \mathit{Fin}_{\W_2}} r(\sigma_2)\cdot \nu_{S_2}(\sigma) = \sum_{\sigma_2 \in \mathit{Fin}_{\W_2}} r(\phi(\sigma_2))\cdot\nu_{S_1}(\phi(\sigma_2)) \\
	& =  & \sum_{\sigma_1 \in \mathit{Fin}_{\W_1}} r(\sigma_1) \cdot \nu_{S_1}(\sigma_1) = V(\W_1) \ .
\end{eqnarray*}

\paragraph{Iteration rule.}
	Let $\W_1$, $\W_2$ be such that $\W_1\by{\text{iteration}} \W_2$.
	Let $c$, $t$ be as in Definition \ref{def:iteration}. Let $S_2$ be a scheduler for $W_2$ such that $S_2(\sigma_2)=c$ if $c$ is enabled after $\sigma_2$. 

	We define a mapping $\phi$ that maps firing sequences in $\W_1$ to firing sequences in $\W_2$ by removing all occurrences of $t$. Next we define a scheduler $S_1$ for $\W_1$ by $S_1(\sigma_1) = S_2(\phi(\sigma_1))$. Note that $\phi$ is not a bijection but it is surjective. 

Let $r_1$ and $r_2$ be the reward functions of $\W_1$ and $\W_2$. For a sequence $\sigma_2$ in $\W_2$, we claim:
$$r_2(\sigma_2)\cdot \nu_{S_2}(\sigma_2) = \sum_{\sigma_1\in\phi^{-1}(\sigma_2)}r_1(\sigma_1)\cdot \nu_{S_1}(\sigma_1) \ .$$

Let $k$ be the number of times $c$ is enabled during $\sigma_2$. We only consider the case $k=1$,
the general case being similar. We observe that $\sigma_2$ is also a sequence in $\W_1$. 
We have 
\begin{eqnarray}
\nu_{S_1}(\sigma_2) & = & \nu_{S_2}(\sigma_2) \cdot(1-w(t)) \label{eq:nu1} \\
r_1(\sigma_2) & = & r_2(\sigma_2)-\frac{w(t)}{1-w(t)}\cdot c(t) \label{eq:r1}
\end{eqnarray}
\noindent because the probabilistic choice must pick something other than $t$, and because 
the iteration rule adds $\frac{w(t)}{1-w(t)}\cdot c(t)$ to the reward of every transition in $c$ in $\W_2$.

	We now insert $l$ occurrences of $t$ in $\sigma_2$, at the position at which $c$ is enabled, and call the new sequence $\tau_l$. We have $\phi^{-1}(\sigma_2) = \{\tau_l \mid l \geq 0 \}$. Further $r_1(\tau_l)=r_1(\sigma_2)+l\cdot c(t)$ and $\nu_{S_1}(\tau_l)=\nu_{S_1}(\tau)\cdot w(t)^l$,
and so summing over all $l$ we get:
$$\begin{array}{rcll}
\displaystyle \sum_{\sigma_1\in\phi^{-1}(\sigma_2)}r_1(\sigma_1)\cdot \nu_{S_1}(\sigma_1) & = & \displaystyle \sum_{l=0}^\infty r_1(\tau_l)\cdot \nu_{S_1}(\tau_l)  \\[0.2cm]
& = & \displaystyle \nu_{S_1}(\sigma_2) \cdot \sum_{l=0}^\infty (r_1(\sigma_2)+l\cdot c(t)) \cdot w(t)^l \\[0.3cm]
& = & \displaystyle  \nu_{S_1}(\sigma_2) \cdot \bigg( \frac{r_1(\sigma_2)}{1-w(t)} +\frac{ c(t) \cdot w(t)}{(1-w(t))^2} \bigg) \\[0.3cm]
& = & \displaystyle  \nu_{S_2}(\sigma_2) \cdot \bigg( r_1(\sigma_2) + \frac{c(t) \cdot w(t)}{1-w(t)} \bigg) & \mbox{(by \ref{eq:nu1})}\\ [0.3cm]
& = & \displaystyle \nu_{S_2}(\sigma_2) \cdot r_2(\sigma_2) & \mbox{(by \ref{eq:r1})}
\end{array}$$
\noindent and the claim is proved.

Now, using the claim we obtain:
	 \begin{eqnarray*}
V(\W_2) & = & \sum_{\sigma_2 \in \mathit{Fin}_{\W_2}} r_2(\sigma_2)\cdot \nu_{S_2}(\sigma) = \sum_{\sigma_2 \in \mathit{Fin}_{\W_2}}\sum_{\sigma_1\in\phi^{-1}(\sigma_2)}r_1(\sigma_1)\cdot\nu_{S_1}(\sigma_1) \\
& = & \sum_{\sigma_1 \in \mathit{Fin}_{\W_1}} r_1(\sigma_1) \cdot \nu_{S_1}(\sigma_1) = V(\W_1)
\end{eqnarray*}
\noindent where the third equality follows from the fact that $\phi$ is defined on all sequences of $\W_1$ and thus $\phi^{-1}$
hits every sequence in $\W_1$ exactly once.

\paragraph{Merge rule.}
	Let $\W_1$, $\W_2$ be such that $\W_1\by{\text{merge}} \W_2$.
	Let $t_1$, $t_2$ be as in Definition \ref{def:merge}. Let $S_2$ be a scheduler for $\W_2$.

	We define a mapping $\phi$ that maps firing sequences in $\W_1$ to firing sequences in $\W_2$ by replacing all occurrences of $t_1$ and $t_2$ by $t_m$. We define a scheduler $S_1$ for $\W_1$ by $S_1(\sigma_1) = S_2(\phi(\sigma_1))$.

	Once again, $\phi$ is a surjective function. For a sequence $\sigma_2$ in $\W_2$, we claim $r(\sigma_2)\cdot \nu_{S_2}(\sigma_2) = \sum_{\sigma_1\in\phi^{-1}(\sigma_2)}r(\sigma_1)\cdot \nu_{S_1}(\sigma_1)$. Indeed, every sequence $\sigma_1$ the set $\phi^{-1}(\sigma_2)$ can be obtained by replacing $t_m$ by either $t_1$ or $t_2$. So, by Definition \ref{def:merge}, the sums are equal.

	As for the iteration rule, this equality and the fact that $\phi$ is defined for every sequence in $\W_1$ imply that 
the expected rewards of $\W_1$ and $\W_2$ are equal.

\qed
\end{proof}

In  \cite{esparza2016reduction} we provide a reduction algorithm for non-probabilistic free-choice workflow, and prove the following result.

\begin{theorem}[Completeness\cite{esparza2016reduction}]
The reduction algorithm summarizes every sound free choice workflow net
in at most $\mathcal{O}(|C|^4\cdot |T|)$ applications of the shortcut rule and $\mathcal{O}(|C|^4+|C|^2\cdot|T|)$ applications of the merge and iteration rules, where $C$ is the set of clusters of the net.
Any unsound free-choice workflow nets can be recognized as unsound in the same number of rule applications.
\end{theorem}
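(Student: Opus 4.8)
The plan is to reduce the statement to the purely structural (non-probabilistic) setting and then to sketch the combinatorial argument that underlies it. The first observation is that each of the three rules modifies the weight and reward functions but leaves the action on the underlying net $(P,T,F,i,o)$ identical to the corresponding rule for free-choice workflow nets in \cite{esparza2016reduction}: merge identifies two transitions with equal pre- and postsets, shortcut replaces a transition by its compositions with an unconditionally enabled cluster, and iteration deletes a self-looping transition (Definitions~\ref{def:merge}, \ref{def:shortcut}, \ref{def:iteration}). Since completeness and the counting bounds depend only on which rule is applicable where and how often, the reduction of a PWN proceeds exactly as that of its net skeleton. Hence it suffices to prove the claim for the skeleton, which is the content of the cited theorem.

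For that underlying claim I would argue in three steps. First I would fix the invariants and prove termination: each rule keeps the net inside the class of 1-safe sound free-choice workflow nets (soundness preservation is the part imported from \cite{esparza2016reduction} and already invoked in Theorem~\ref{thm:correctness}), and each step of the reduction algorithm strictly decreases a well-founded potential. The natural potential is lexicographic, with the number of clusters as primary component -- driven down whenever shortcut makes a cluster source-only and deletes it (step (5) of Definition~\ref{def:shortcut}) -- and a transition-count measure as secondary component, decreased by merge and iteration. The subtle point is that a single shortcut temporarily introduces $|c|$ fresh transitions and need not delete a cluster; this is precisely why the algorithm must apply shortcuts in a disciplined order, and the potential must be chosen so as to absorb this transient increase.

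The second and hardest step is a progress lemma: every sound free-choice workflow net that is not the trivial one-transition net enables at least one rule. I would prove this by closing the net into a live, bounded, strongly connected free-choice net (adding the return arc $o \to i$) and invoking the structure theory of free-choice nets \cite{desel2005free}. Choosing a cluster that is closest to $i$ along a shortest directed path, a case analysis should show that either some transition unconditionally enables it (shortcut), or two of its transitions share pre- and postset (merge), or it carries a self-loop (iteration). Ruling out every remaining local configuration is where the nontrivial confluence and rank properties of sound free-choice nets are needed, and I expect this to be the main obstacle. The third step is the counting: refining the potential argument, each ordered pair of clusters can be the target of only boundedly many shortcuts, yielding $\mathcal{O}(|C|^4\cdot|T|)$, while merge and iteration are charged against decreases of the transition count inside clusters, yielding $\mathcal{O}(|C|^4+|C|^2\cdot|T|)$.

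Finally, the unsound case follows for free. Termination from the first step guarantees that the algorithm halts within the same bound on any input, sound or not. Because the rules preserve soundness and the trivial one-transition net is sound, an unsound net can never be reduced to it; the algorithm therefore halts at a non-trivial irreducible net, which certifies unsoundness. This gives the second sentence of the theorem.
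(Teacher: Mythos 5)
The paper itself offers no proof of this theorem: it is imported verbatim from \cite{esparza2016reduction}, and the only justification this paper needs (and tacitly relies on) is exactly your first paragraph --- the guards and actions of the merge, shortcut and iteration rules of Definitions~\ref{def:merge}, \ref{def:shortcut} and \ref{def:iteration} are purely structural, the weight and reward updates never affect rule applicability, and hence the reduction of a PWN mirrors step-for-step the reduction of its net skeleton, so both completeness and the counting bounds transfer. That part of your proposal is correct and is essentially the same approach as the paper. Your remaining three paragraphs attempt to reconstruct the internal proof of the cited result, which this paper never reproduces, so they cannot be checked against it; they are plausible in outline, and you correctly identify the real crux that a naive ``some rule always applies'' argument is insufficient because a shortcut with $|c|>1$ increases the number of transitions and need not delete a cluster, so termination hinges on the algorithm's ordering and a carefully chosen potential --- but your specific guesses (the closest-to-$i$ cluster choice, the exact lexicographic potential, the charging scheme behind $\mathcal{O}(|C|^4\cdot|T|)$) remain speculative rather than established. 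One small point worth making explicit in the transfer step: the probabilistic iteration rule divides by $1-w(t)$, so you should note that in a sound net a self-looping transition $t$ with $[t]=\{t\}$ cannot occur (its cluster's tokens could never be consumed, contradicting soundness), so the arithmetic is always well-defined along a reduction of a sound net, and in the unsound case the structural behavior of the algorithm is unaffected; with that remark added, your reduction to the skeleton is complete at the level at which the paper operates.
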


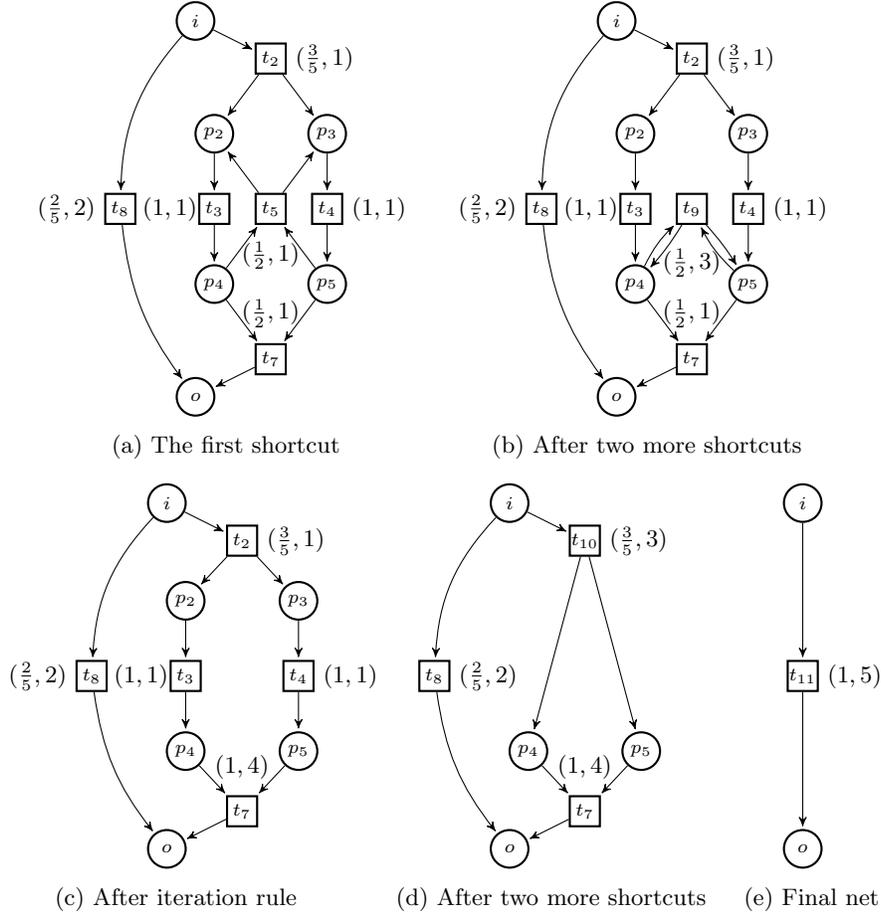
\begin{figure}[tbh]
\centering
\begin{subfigure}[c]{0.45\textwidth}
\centering
	\scalebox{1}{
		\begin{tikzpicture}[>=stealth',bend angle=45,auto]
	% \tikzstyle{every node}=[font=\scriptsize]
	\tikzstyle{place}=[circle,thick,draw=black,fill=white,minimum size=5mm,inner sep=0mm,font=\scriptsize]
	\tikzstyle{transition}=[rectangle,thick,draw=black,fill=white,minimum size=4mm,inner sep=0mm,font=\scriptsize]
	\tikzstyle{every label}=[black]

	\node [place] at (-.75,-0.5) (c0){$i$};
	\node [place] at (-0.5,-2) (c2){$p_2$};
	\node [place] at (1,-2) (c3){$p_3$};
	\node [place] at (-0.5,-4) (c4){$p_4$};
	\node [place] at (1,-4) (c5){$p_5$};
	\node [place] at (-.75,-5.5) (o){$o$};

	\node [transition] at (-1.75,-3) (t1) {$t_8$}
	edge [pre,bend left=20] (c0)
	edge [post,bend right=15] (o);

	\node [transition] at (0.25,-1) (t2) {$t_2$}
	edge [pre] (c0)
	edge [post] (c2)
	edge [post] (c3);

	\node [transition] at (-0.5,-3) (t3) {$t_3$}
	edge [pre] (c2)
	edge [post] (c4);

	\node [transition] at (1,-3) (t4) {$t_4$}
	edge [pre] (c3)
	edge [post] (c5);

	\node [transition] at (0.25,-3) (t5) {$t_5$}
	edge [pre] (c4)
	edge [pre] (c5)
	edge [post] (c2)
	edge [post] (c3);

	\node [transition] at (0.25,-5) (t7) {$t_7$}
	edge [pre] (c4)
	edge [pre] (c5)
	edge [post] (o);;

	%weights
	\node [left=0cm of t1] {$(\frac{2}{5},2)$};
	\node [right=0cm of t2] {$(\frac{3}{5},1)$};
	\node [left=-0.1cm of t3] {$(1,1)$};
	\node [right=0cm of t4] {$(1,1)$};
	\node [below=0.1cm of t5] {$(\frac{1}{2},1)$};
	\node [above=0.1cm of t7] {$(\frac{1}{2},1)$};

\end{tikzpicture}
	}
	\subcaption{The first shortcut}
	\label{fig:summary:shortcut1}
\end{subfigure}
\begin{subfigure}[c]{0.45\textwidth}
\centering
	\scalebox{1}{
		\begin{tikzpicture}[>=stealth',bend angle=45,auto]
	% \tikzstyle{every node}=[font=\scriptsize]
	\tikzstyle{place}=[circle,thick,draw=black,fill=white,minimum size=5mm,inner sep=0mm,font=\scriptsize]
	\tikzstyle{transition}=[rectangle,thick,draw=black,fill=white,minimum size=4mm,inner sep=0mm,font=\scriptsize]
	\tikzstyle{every label}=[black]

	\node [place] at (-.75,-0.5) (c0){$i$};
	\node [place] at (-0.5,-2) (c2){$p_2$};
	\node [place] at (1,-2) (c3){$p_3$};
	\node [place] at (-0.5,-4) (c4){$p_4$};
	\node [place] at (1,-4) (c5){$p_5$};
	\node [place] at (-.75,-5.5) (o){$o$};

	\node [transition] at (-1.75,-3) (t1) {$t_8$}
	edge [pre,bend left=20] (c0)
	edge [post,bend right=15] (o);

	\node [transition] at (0.25,-1) (t2) {$t_2$}
	edge [pre] (c0)
	edge [post] (c2)
	edge [post] (c3);

	\node [transition] at (-0.5,-3) (t3) {$t_3$}
	edge [pre] (c2)
	edge [post] (c4);

	\node [transition] at (1,-3) (t4) {$t_4$}
	edge [pre] (c3)
	edge [post] (c5);

	\node [transition] at (0.25,-3) (t9) {$t_9$}
	edge [pre,bend right=10] (c4)
	edge [pre,bend right=10] (c5)
	edge [post,bend left=5] (c4)
	edge [post,bend left=5] (c5);

	\node [transition] at (0.25,-5) (t7) {$t_7$}
	edge [pre] (c4)
	edge [pre] (c5)
	edge [post] (o);
	
	%weights
	\node [left=0cm of t1] {$(\frac{2}{5},2)$};
	\node [right=0cm of t2] {$(\frac{3}{5},1)$};
	\node [left=-0.1cm of t3] {$(1,1)$};
	\node [right=0cm of t4] {$(1,1)$};
	\node [below=0.2cm of t9] {$(\frac{1}{2},3)$};
	\node [above=0.1cm of t7] {$(\frac{1}{2},1)$};

\end{tikzpicture}
	}
	\subcaption{After two more shortcuts}
	\label{fig:summary:shortcut2}
\end{subfigure}
\begin{subfigure}[c]{0.4\textwidth}
\centering
	\scalebox{1}{
		\begin{tikzpicture}[>=stealth',bend angle=45,auto]
\node[inner sep=0,outer sep=0] at (0,0){};

	% \tikzstyle{every node}=[font=\scriptsize]
	\tikzstyle{place}=[circle,thick,draw=black,fill=white,minimum size=5mm,inner sep=0mm,font=\scriptsize]
	\tikzstyle{transition}=[rectangle,thick,draw=black,fill=white,minimum size=4mm,inner sep=0mm,font=\scriptsize]
	\tikzstyle{every label}=[black]

	\node [place] at (-.75,-0.5) (c0){$i$};
	\node [place] at (-0.5,-1.8) (c2){$p_2$};
	\node [place] at (1,-1.8) (c3){$p_3$};
	\node [place] at (-0.5,-3.8) (c4){$p_4$};
	\node [place] at (1,-3.8) (c5){$p_5$};
	\node [place] at (-.75,-5.1) (o){$o$};

	\node [transition] at (-1.75,-2.8) (t1) {$t_8$}
	edge [pre,bend left=20] (c0)
	edge [post,bend right=15] (o);

	\node [transition] at (0.25,-1) (t2) {$t_2$}
	edge [pre] (c0)
	edge [post] (c2)
	edge [post] (c3);

	\node [transition] at (-0.5,-2.8) (t3) {$t_3$}
	edge [pre] (c2)
	edge [post] (c4);

	\node [transition] at (1,-2.8) (t4) {$t_4$}
	edge [pre] (c3)
	edge [post] (c5);

	\node [transition] at (0.25,-4.6) (t7) {$t_7$}
	edge [pre] (c4)
	edge [pre] (c5)
	edge [post] (o);

	%weights
	\node [left=0cm of t1] {$(\frac{2}{5},2)$};
	\node [right=0cm of t2] {$(\frac{3}{5},1)$};
	\node [left=-0.1cm of t3] {$(1,1)$};
	\node [right=0cm of t4] {$(1,1)$};
	\node [above=0.1cm of t7] {$(1,4)$};

\end{tikzpicture}
	}
	\subcaption{After iteration rule}
	\label{fig:summary:shortcut3}
\end{subfigure}
\begin{subfigure}[c]{0.4\textwidth}
\centering
	\scalebox{1}{
		\begin{tikzpicture}[>=stealth',bend angle=45,auto]
\node[inner sep=0,outer sep=0] at (0,0){};

	% \tikzstyle{every node}=[font=\scriptsize]
	\tikzstyle{place}=[circle,thick,draw=black,fill=white,minimum size=5mm,inner sep=0mm,font=\scriptsize]
	\tikzstyle{transition}=[rectangle,thick,draw=black,fill=white,minimum size=4mm,inner sep=0mm,font=\scriptsize]
	\tikzstyle{every label}=[black]
	
	\node [place] at (-.75,-0.5) (c0){$i$};
	\node [place] at (-0.5,-3.8) (c4){$p_4$};
	\node [place] at (1,-3.8) (c5){$p_5$};
	\node [place] at (-.75,-5.1) (o){$o$};

	\node [transition] at (-1.75,-2.8) (t1) {$t_8$}
	edge [pre,bend left=20] (c0)
	edge [post,bend right=15] (o);

	\node [transition] at (0.25,-1) (t10) {$t_{10}$}
	edge [pre] (c0)
	edge [post] (c4)
	edge [post] (c5);

	\node [transition] at (0.25,-4.6) (t7) {$t_7$}
	edge [pre] (c4)
	edge [pre] (c5)
	edge [post] (o);

	%weights
	\node [right=0cm of t1] {$(\frac{2}{5},2)$};
	\node [right=0cm of t10] {$(\frac{3}{5},3)$};
	\node [above=0.1cm of t7] {$(1,4)$};

\end{tikzpicture}
	}
	\subcaption{After two more shortcuts}
	\label{fig:summary:shortcut4}
\end{subfigure}
\begin{subfigure}[c]{0.15\textwidth}
	\centering
	\scalebox{1}{
		\begin{tikzpicture}[>=stealth',bend angle=45,auto]
\node[inner sep=0,outer sep=0] at (0,0){};
	% \tikzstyle{every node}=[font=\scriptsize]
	\tikzstyle{place}=[circle,thick,draw=black,fill=white,minimum size=5mm,inner sep=0mm,font=\scriptsize]
	\tikzstyle{transition}=[rectangle,thick,draw=black,fill=white,minimum size=4mm,inner sep=0mm,font=\scriptsize]
	\tikzstyle{every label}=[black]
	
	\node [place] at (-0,-0.5) (c0){$i$};
	\node [place] at (-0,-5.1) (o){$o$};

	\node [transition] at (-0,-2.8) (t11) {$t_{11}$}
	edge [pre] (c0)
	edge [post] (o);
	%weights
	\node [right=0cm of t11] {$(1,5)$};
	%dummy for center
	\node [left=0.25cm of t11] {};
\end{tikzpicture}
	}
	\subcaption{Final net}
	\label{fig:summary:end}
\end{subfigure}
\caption{Example of reduction}
\end{figure}

We illustrate a complete reduction by reducing the example of Figure \ref{fig:example2}. 
We set the reward for each transition to $1$, so the expected reward of the net is the expected number of transition
firings until the final marking is reached. Initially, $t_1$ unconditionally enables $[t_6]$ 
and we apply the shortcut rule. Since $[t_6]=\{t_6\}$, exactly one new transition $t_{8}$ is created. 
Furthermore $t_1$, $p_1$ and $t_6$ are removed (Figure \ref{fig:summary:shortcut1}). Now, 
$t_5$ unconditionally enables $[t_3]$ and $[t_4]$. We apply the shortcut rule twice and call 
the result $t_{9}$ (Figure \ref{fig:summary:shortcut2}). Transition $t_{9}$ now satisfies 
the guard of the iteration rule and can be removed, changing the label of $t_7$ (Figure \ref{fig:summary:shortcut3}).
Since $t_2$ unconditionally enables $[t_3]$ and $[t_4]$, we apply the shortcut rule twice and call the result $t_{10}$
(Figure \ref{fig:summary:shortcut4}). After short-cutting $t_{10}$, we apply the merge rule to
the two remaining transitions, which yields a net with one single transition labeled by $(1,5)$ (Figure \ref{fig:summary:end}). So the net terminates with probability 1 after firing 5 transitions in average.

\paragraph{Fixing a scheduler.} Since the expected reward of a PWN $\W$ is independent of the scheduler, we can fix a scheduler $S$ and compute the expected reward $V^S(\W)$. This requires to compute only
the Markov chain induced by $S$, which can be much smaller than the MDP. However, it is easy to see that this idea does not lead to a polynomial algorithm. Consider the free-choice PWN of Figure \ref{fig:sched}, and the scheduler that always chooses the largest enabled cluster
according to the order 
$$\{t_{11}, t_{12}\} > \cdots > \{t_{n1}, t_{n2}\} > \{u_{11}\} > \{ u_{12}\} > \cdots > \{u_{n1}\} > \{ u_{n2}\}$$
Then for every subset $K \subset \{1, \ldots, n\}$ the Markov chain contains a state enabling $\{ u_{i1} \mid i \in K\} \cup \{u_{i2} \mid i \notin K\}$, and has therefore exponential size. There might be a procedure to find a suitable scheduler for a given PWN such that the Markov chain has polynomial size, but we do not know of such a procedure.

 \begin{figure}[ht]
 	\centering
 	\begin{tikzpicture}[>=stealth',bend angle=45,auto,scale=0.7,every node/.style={scale=0.7}]
	\tikzstyle{every node}=[font=\scriptsize]
	\tikzstyle{place}=[circle,thick,draw=black,fill=white,minimum size=4mm,inner sep=0mm]
	\tikzstyle{transition}=[rectangle,thick,draw=black,fill=white,minimum size=3mm,inner sep=0mm]
	\tikzstyle{every label}=[black]

	\node [place] at (0,0) (c0){$i$};
	\node [place] at (2,-1.5) (c1){};
	\node [place] at (4,-2) (c11){};
	\node [place] at (4,-1) (c12){};
	\node [place] at (2,1.5) (c2){};
	\node [place] at (4,1) (c21){};
	\node [place] at (4,2) (c22){};
	\node [place] at (6,-1.5) (c3){};
	\node [place] at (6,1.5) (c4){};
	\node [place] at (8,0) (o){$o$};

	\node [transition] at (1,0) (t1) {}
	edge [pre] (c0)
	edge [post] (c1)
	edge [post] (c2);

	\node at (4,0) {{\Huge$\ldots$}};

	\node [transition] at (3,-2) (t2) {$t_{n1}$}
	edge [pre] (c1)
	edge [post] (c11);

	\node [transition] at (3,-1) (t3) {$t_{n2}$}
	edge [pre] (c1)
	edge [post] (c12);
	
	\node [transition] at (5,-2) (t2) {$u_{n1}$}
	edge [pre] (c11)
	edge [post] (c3);

	\node [transition] at (5,-1) (t3) {$t_{n2}$}
	edge [pre] (c12)
	edge [post] (c3);

%	\node [below=0cm of t2] {$(\frac{4}{5},0)$};
%	\node [above=0cm of t3] {$(\frac{1}{5},1)$};

	\node [transition] at (3,2) (t4) {$t_{11}$}
	edge [pre] (c2)
	edge [post] (c22);

	\node [transition] at (3,1) (t5) {$t_{12}$}
	edge [pre] (c2)
	edge [post] (c21);
	
	\node [transition] at (5,2) (t2) {$u_{11}$}
	edge [pre] (c22)
	edge [post] (c4);

	\node [transition] at (5,1) (t3) {$u_{12}$}
	edge [pre] (c21)
	edge [post] (c4);

%	\node [below=0cm of t4] {$(\frac{2}{3},0)$};
%	\node [above=0cm of t5] {$(\frac{1}{3},2)$};

	\node [transition] at (7,0) (t6) {}
	edge [pre] (c3)
	edge [pre] (c4)
	edge [post] (o);

\end{tikzpicture}
 	\caption{Example}
 	\label{fig:sched}
 \end{figure}

\section{Experimental evaluation}
\label{sec:experiments}

We have implemented our reduction algorithm as an extension of the 
algorithm described in \cite{esparza2016reduction}. In this section we report 
on its performance and on a comparison with \prism{}\cite{KNP11}. The results confirm 
what could be expected:
our polynomial algorithm for free-choice workflows outperforms \prism{}'s 
exponential, but more generally applicable algorithm. More interestingly,
they provide quantitative information on the speed-up achieved by our 
algorithm.

\paragraph{Industrial benchmarks.} The benchmark suite consists of 1385 free-choice workflow nets, 
previously studied in \cite{fahland2009instantaneous}, of which 470 nets are sound. The workflows correspond to 
business models designed at IBM. Since they do not contain probabilistic information, 
we assigned to each transition $t$ the probability $\frac{1}{|[t]|}$ (i.e., the probability is 
distributed uniformly among the transitions of a cluster). We study the following questions, which 
can be answered by both our algorithm and \prism{}: Is the probability to reach the final marking equal to one 
(equivalent to  ``is the net sound?''). And if so, how many transitions must be fired in average to 
reach the final marking? (This corresponds to a reward function assigning reward 1 to each transition.) 

All experiments were carried out on an i7-3820 CPU using 1 GB of memory.

\prism{} has three different analysis engines able to compute expected rewards: 
explicit, sparse and symbolic (bdd). In a preliminary experiment with a timeout of 30 seconds, we observed that the 
explicit engine clearly outperforms the other two: It solved 1309 cases, while the bdd and 
sparse engines only solved 636 and 638 cases, respectively. Moreover, 418 and 423 of the unsolved cases 
were due to memory overflow, so even with a larger timeout the explicit engine is still leading. For this reason, in the comparison we only used the explicit engine.

After increasing the timeout to 10 minutes, the explicit engine did not solve any further case, leaving
76 cases unsolved. This was due to the large state space of the nets: 69 out of the 76 have over  
$10^6$ reachable states. 

The 1309 cases were solved by the explicit engine in 353 seconds, with about 10 seconds for the larger nets. 
Our implementation solved all 1385 cases in 5 seconds combined. It never needs
more than $20$ ms for a single net, even for those with more than $10^7$ states (for these nets 
we do not know the exact number of reachable states). 

\begin{figure}[t]
	% \begin{subfigure}[c]{0.48\textwidth}
% 		\centering
% 		\includegraphics[scale=0.4]{impl_runtime2.png}
% 		\caption{Runtimes on the benchmark suite}
% 		\label{fig:implTimes2}
% 	\end{subfigure}
        \begin{subfigure}[c]{0.35\textwidth}
	\centering
	\begin{tikzpicture}[>=stealth',bend angle=45,auto,scale=0.6,every node/.style={scale=0.6}]
	\tikzstyle{every node}=[font=\scriptsize]
	\tikzstyle{place}=[circle,thick,draw=black,fill=white,minimum size=4mm,inner sep=0mm]
	\tikzstyle{transition}=[rectangle,thick,draw=black,fill=white,minimum size=3mm,inner sep=0mm]
	\tikzstyle{every label}=[black]

	\node [place] at (0,0) (c0){$i$};
	\node [place] at (2,-2.5) (c1){};
	\node [place] at (2,2.5) (c2){};
	\node [place] at (4,-2.5) (c3){};
	\node [place] at (4,2.5) (c4){};
	\node [place] at (9,0) (o){$o$};

	\node [transition] at (1,0) (t1) {}
	edge [pre] (c0)
	edge [post] (c1)
	edge [post] (2,-1)
	edge [post] (2,0)
	edge [post] (2,1)
	edge [post] (c2);

	\node at (3,0) {$\ldots$};

	\node [transition] at (3,-3) (t2) {}
	edge [pre] (c1)
	edge [post] (c3);

	\node [transition] at (3,-2) (t3) {}
	edge [pre] (c1)
	edge [post] (c3);

	\node [below=0cm of t2] {$(\frac{4}{5},0)$};
	\node [above=0cm of t3] {$(\frac{1}{5},1)$};

	\node [transition] at (3,2) (t4) {}
	edge [pre] (c2)
	edge [post] (c4);

	\node [transition] at (3,3) (t5) {}
	edge [pre] (c2)
	edge [post] (c4);

	\node [below=0cm of t4] {$(\frac{2}{3},0)$};
	\node [above=0cm of t5] {$(\frac{1}{3},2)$};

	\node [transition] at (5,0) (t6) {}
	edge [pre] (c3)
	edge [pre] (4,-1)
	edge [pre] (4,0)
	edge [pre] (4,1)
	edge [pre] (c4)
	edge [post] (6,-2.5)
	edge [post] (6,-1)
	edge [post] (6,0)
	edge [post] (6,1)
	edge [post] (6,2.5);

	\node at (6.5,0) {$\ldots$};

	\node [transition] at (8,0) (t6) {}
	edge [pre] (7,-2.5)
	edge [pre] (7,-1)
	edge [pre] (7,0)
	edge [pre] (7,1)
	edge [pre] (7,2.5)
	edge [post] (o);

\end{tikzpicture}
	\caption{PWN}
	\label{fig:implExample}
        \end{subfigure}
	\begin{subfigure}[c]{0.63\textwidth}
		\centering
		\includegraphics[scale=0.5]{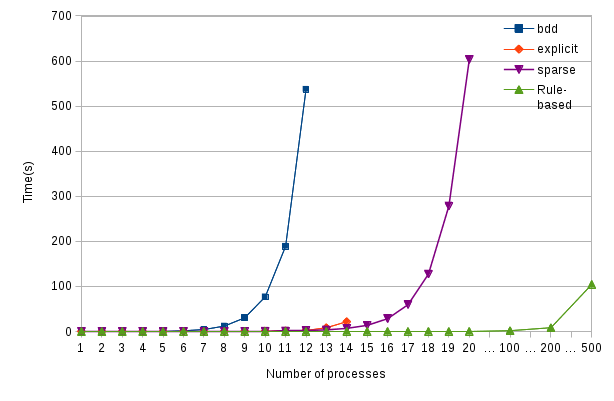}
		\caption{Runtimes for the academic benchmark}
		\label{fig:implTimes}
	\end{subfigure}
	\caption{Academic benchmark}
\end{figure}

In the unsound case, our implementation still reduces the reachable state space by a lot, 
which makes it easier to apply state exploration tools for other problems than the expected reward,
like the distribution of the rewards. After reduction, the 69 nets with at least $10^6$ states had an average 
of 5950 states, with the largest at 313443 reachable states.

% \begin{figure}[ht]
% 	\centering
% 	\input{tikz/implExample.tex}
% 	\caption{Example}
% 	\label{fig:implExample}
% \end{figure}

\paragraph{An academic benchmark.} Many workflows in our suite have a large state space because of fragments modeling the following situation.
Multiple processes do a computation step in parallel, after which they synchronize. Process $i$ may execute its step normally with probability $p_i$, or 
a failure may occur with probability $1-p_i$, which requires to take a recovery action and therefore has a higher cost. Such a scenario is modeled
by the free-choice PWNs net of Figure \ref{fig:implExample}, where 
the probabilities and costs are chosen at random. The scenario can also be easily modeled
in  \prism{}. Figure \ref{fig:implTimes} shows the time needed by the three
\prism{} engines and by our implementation for computing the expected reward using a time limit of 10 minutes. The number of reachable 
states grows exponentially in the number processes, and the explicit engine runs out of memory
for 15 processes. Since the failure probabilities vary between the processes, there is little structure 
that the symbolic engine can exploit, and it times out for 13 processes. The sparse engine reaches 
the time limit at 20 processes. However, since the rule-based approach does not need to construct the state space, 
we can easily solve the problem with up to 500 processes.

\section{Conclusion}
\label{sec:conclusion}

We have presented a set of reduction rules for probabilistic workflow 
nets with rewards that preserve soundness and the expected reward of the net, 
and are complete for free-choice nets. While the semantics and the expected reward 
are defined via an associated Markov Decision Process, our rules work directly 
on the workflow net. The rules lead to the first polynomial-time algorithm 
to compute the expected reward. 

% In an experimental evaluation on 1385 workflow nets derived from industrial 
% business processes the implementation of our reduction algorithm sound nets with a reachable state space of more than 10 million states, a size where explicit tools like \prism{} exhibit both memory and time problems. Even for unsound nets, the state space was reduced considerably which can enable consecutive application of explicit tools.

In future work we want to generalize our algorithm in several ways. First, 
we think that the cost model can be extended to any semiring satisfying some mild conditions. 
A particular instance of this result should lead to an algorithm for 
computing the probability on non-termination and the conditional expected reward under termination, 
which is of interest in the unsound case. Second, we plan to extend our approach to GSPNs with 
the semantics introduced in \cite{eisentraut2013semantics}. Third, we think that the expected {\em time
to termination} of a free-choice workflow can also be computed by means of a reduction algorithm.

\paragraph{Acknowledgments.} We thank the anonymous referees for their comments, 
and especially the one who helped us correct a mistake in Lemmas \ref{lem:valueSound} and \ref{lem:valueUnsound}.

\bibliography{ref}
%\input{ref.bbl}

% \appendix
% \input{7-appendix}

\end{document}